\documentclass[11pt]{article}

\PassOptionsToPackage{numbers, compress}{natbib}
\usepackage[preprint]{neurips_2025}

\usepackage[utf8]{inputenc}
\usepackage[T1]{fontenc}
\usepackage{amsmath,amssymb,amsthm}
\usepackage{mathtools}
\usepackage{hyperref}
\usepackage{cleveref}
\usepackage{booktabs}
\usepackage{enumitem}
\usepackage{xcolor}
\usepackage{graphicx}
\usepackage{tikz}
\usetikzlibrary{patterns,decorations.pathreplacing,calc}

\newtheorem{theorem}{Theorem}[section]
\newtheorem{lemma}[theorem]{Lemma}

\newtheorem{proposition}[theorem]{Proposition}
\newtheorem{definition}[theorem]{Definition}
\newtheorem{remark}[theorem]{Remark}

\newtheorem{counterexample}[theorem]{Counterexample}

\newcommand{\R}{\mathbb{R}}

\newcommand{\cl}[1]{\overline{#1}}
\newcommand{\Fix}{\operatorname{Fix}}
\newcommand{\AD}{\operatorname{AD}}
\newcommand{\id}{\operatorname{id}}

\newcommand{\dist}{\operatorname{dist}}

\title{The Defense Trilemma: Why Prompt Injection Defense Wrappers Fail?}

\author{
  Manish Bhatt\thanks{Corresponding author: manish.bhatt13212@gmail.com}\hspace{0.3em}\footnotemark[2] \\
  OWASP, Amazon Leo, UNO \\
  \And
  Sarthak Munshi\footnotemark[2] \\
  Amazon Web Services \\
  \And
  Vineeth Sai Narajala\footnotemark[2] \\
  Cisco \\
  \And
  Idan Habler\footnotemark[2] \\
  Cisco \\
  \AND
  Ammar Al-Kahfah\footnotemark[2] \\
  Amazon Web Services \\
  \And
  Ken Huang \\
  Distributedapps.ai \\
  \And
  Joel Webb \\
  UNO \\
  \And
  Blake Gatto \\
  Shrewd Security \\
  \And
  Md Tamjidul Hoque \\
  UNO
}

\date{}

\begin{document}

\maketitle
\renewcommand{\thefootnote}{\fnsymbol{footnote}}
 \footnotetext[2]{Equal contribution. This work was conducted independently and does not reflect the views, policies, or endorsements of the
  authors' respective employers.}
\begin{abstract}
We prove that no continuous, utility-preserving wrapper defense---a
function $D\colon X\to X$ that preprocesses inputs before the model
sees them---can make all outputs strictly safe for a language model
with connected prompt space, and we characterize exactly where every
such defense must fail. We establish three results under
successively stronger hypotheses: \emph{boundary fixation}---the defense must leave some
threshold-level inputs unchanged; an \emph{$\varepsilon$-robust
constraint}---under Lipschitz regularity, a positive-measure band
around fixed boundary points remains near-threshold; and a
\emph{persistent unsafe region}---under a transversality condition, a
positive-measure subset of inputs remains strictly unsafe. These
constitute a \emph{defense trilemma}: continuity, utility preservation,
and completeness cannot coexist. We prove parallel discrete results
requiring no topology, and extend to multi-turn interactions,
stochastic defenses. The results do not
preclude training-time alignment, architectural changes, or defenses
that sacrifice utility. The full theory is mechanically verified in
Lean~4 with Mathlib
\url{https://github.com/mbhatt1/stuff/tree/main/ManifoldProofs}
(46 files, ${\sim}360$ theorems, no admitted proofs, three standard
axioms) and validated empirically on three LLMs~\cite{munshi2026manifold}.
\end{abstract}

\section{Introduction}
\label{sec:intro}

Can you build a wrapper around a language model that eliminates all
prompt injection vulnerabilities? Most current defense work implicitly
assumes yes. Input classifiers that flag suspicious
prompts~\cite{alon2023detecting}, constitutional rewriting
pipelines~\cite{bai2022constitutional}, and input sanitization
layers~\cite{inan2023llama} all share the same structure: a
function $D\colon X \to X$ that preprocesses prompts before the model
sees them, mapping unsafe inputs to safe equivalents while leaving safe
inputs unchanged. We prove the answer is no, under two constraints. If the defense
is \emph{continuous} (similar prompts produce similar rewrites) and
\emph{utility-preserving} (safe prompts pass through unchanged), it
cannot be \emph{complete} (make every output safe). These three
properties form a \textbf{defense trilemma}: any two can coexist, but
not all three (\Cref{fig:trilemma}).

The impossibility is not about specific attacks or clever prompt
engineering. It arises from the \emph{geometry} of the prompt space
itself: in a connected space, the safe region is open but not closed,
so any continuous defense that fixes safe inputs must also fix points
on the safety boundary. Under successively stronger hypotheses,
we establish three results with progressively stronger conclusions
(\Cref{fig:escalation}):

\paragraph{Boundary fixation (\Cref{thm:main}).}
The defense must fix at least one boundary point, i.e., a prompt where
alignment deviation equals the threshold exactly---passing it through
with no remediation.

\paragraph{$\varepsilon$-robust constraint (\Cref{thm:eps-robust}).}
Under Lipschitz regularity, the defense cannot uniformly reduce
alignment deviation far below~$\tau$ near the fixed boundary point.
For any $x$ within distance $\delta$ of the fixed point~$z$:
\begin{equation}
  f(D(x)) \;\geq\; \tau - LK\,\delta.
\end{equation}

\paragraph{Persistent unsafe region (\Cref{thm:persistent}).}
Under a transversality condition, the alignment surface rises faster
than the defense can pull it down, leaving a positive-measure region
that remains unsafe:
\begin{equation}
  f(D(x)) > \tau \quad\text{for all } x \in \mathcal{S},
  \qquad \mu(\mathcal{S}) > 0.
\end{equation}

\paragraph{From discrete to continuous.}
All three results apply to continuous interpolants of discrete data.
The Tietze extension theorem guarantees that any finite set of
behavioral observations in a normal space admits continuous extensions,
and the impossibilities hold for every such extension
(\Cref{thm:tietze}).

\medskip
\noindent\textbf{Scope and limitations.}
Our results apply specifically to \emph{continuous, utility-preserving
wrapper defenses} on \emph{connected} prompt spaces. They do \emph{not}
preclude effective safety through other mechanisms, including:
\begin{itemize}[nosep]
  \item training-time alignment (RLHF, DPO, constitutional AI training),
  \item architectural changes to the model itself,
  \item discontinuous defenses (e.g., hard blocklists or discrete classifiers),
  \item output-side filters, ensemble defenses, or human-in-the-loop review,
  \item adaptive-threshold systems,
  \item multi-component systems whose classifiers may reject or redirect
        inputs rather than preserving utility on every prompt.
\end{itemize}
In short, our theorems cover only a single continuous wrapper
$D\colon X \to X$ that preprocesses inputs; any mechanism outside this
class is not constrained by our results.

All three conditions---continuity, utility preservation, and
connectedness---are individually necessary; we give counterexamples
for each (\Cref{app:counterexamples}).

\paragraph{Contributions.}
\begin{enumerate}[nosep]
  \item \textbf{Boundary fixation} (\Cref{thm:main}): any continuous,
    utility-preserving defense on a connected Hausdorff space must fix
    a point $z$ with $f(z) = \tau$. Relaxed to score-preserving and
    $\varepsilon$-approximate variants
    (\Cref{thm:score-preserving,thm:eps-relaxed}).
  \item \textbf{$\varepsilon$-robust constraint} (\Cref{thm:eps-robust}):
    under Lipschitz regularity, $f(D(x)) \geq \tau - LK\dist(x,z)$
    for all~$x$. A positive-measure band near~$z$ is constrained
    (\Cref{thm:band-measure}).
  \item \textbf{Persistent unsafe region} (\Cref{thm:persistent}):
    under a transversality condition ($G > \ell(K+1)$, where $\ell$ is
    the defense-path Lipschitz constant), a positive-measure set
    remains strictly above~$\tau$ after defense.
  \item \textbf{Quantitative bounds}: explicit volume lower bound
    (\Cref{thm:coarea}), cone measure bound (\Cref{thm:cone}), and
    an asymmetric defense dilemma (\Cref{thm:dilemma}).
  \item \textbf{Discrete defense dilemma} (\Cref{thm:disc-dilemma}):
    on finite sets, completeness forces non-injectivity (information
    loss); injectivity forces incompleteness.
  \item \textbf{Extensions}: multi-turn (\Cref{thm:multi-turn}),
    stochastic (\Cref{thm:stochastic}), and pipeline
    (\Cref{thm:pipeline}) settings.
  \item \textbf{Lean~4 formalization}: 46 files, ${\sim}360$ theorems,
    zero \texttt{sorry} statements, three standard axioms. Full proofs
    in \Cref{app:proofs}; artifact in \Cref{app:artifact}.
\end{enumerate}

\section{Related Work}
\label{sec:related}

Adversarial robustness research has established that small
perturbations can fool classifiers~\cite{szegedy2013intriguing,
goodfellow2014explaining,carlini2017evaluating,madry2018towards} and
that robustness may fundamentally trade off against
accuracy~\cite{tsipras2018robustness,fawzi2018adversarial}. Certified
defenses provide guarantees for fixed
models~\cite{cohen2019certified,katz2017reluplex,singh2019abstract,
huang2017safety,bagnall2019certifying}, and topological perspectives
have illuminated the structure of decision
boundaries~\cite{naitzat2020topology}. For LLMs specifically,
jailbreaking attacks~\cite{zou2023universal,chao2024jailbreaking,
mehrotra2024tree,greshake2023indirect} and automated
red-teaming~\cite{mouret2015illuminating,samvelyan2024rainbow}
demonstrate persistent vulnerabilities.

Our work differs in kind: rather than studying how failures arise for
fixed models or how specific systems can be certified, we impose
\emph{universal constraints on the defense map itself}. The closest
conceptual precedent is the no-free-lunch
framework~\cite{wolpert1997no}, which proves that no optimizer
dominates across all problems. We prove the analogous result for
wrapper defenses: under continuity and utility preservation, no
defense eliminates all failures.

\Cref{tab:comparison} situates our result among comparable
impossibility and tradeoff theorems.

\begin{table}[t]
\centering
\small
\caption{Positioning among impossibility results (top) and empirical
findings independently consistent with our predictions (bottom).}
\label{tab:comparison}
\begin{tabular}{@{}p{3.4cm}p{2.0cm}p{2.4cm}p{2.4cm}p{2.2cm}@{}}
\toprule
\textbf{Result} & \textbf{Target} & \textbf{Method} & \textbf{Impossibility} & \textbf{Verified} \\
\midrule
Tsipras et al.\ \cite{tsipras2018robustness}
  & Classifier
  & PAC / Gaussian
  & Robustness + accuracy
  & No \\[2pt]
Fawzi et al.\ \cite{fawzi2018adversarial}
  & Classifier
  & Lipschitz geom.
  & Adv.\ invulnerability
  & No \\[2pt]
Wolpert \& Macready \cite{wolpert1997no}
  & Optimizer
  & Counting
  & Universal dominance
  & No \\[2pt]
Cohen et al.\ \cite{cohen2019certified}
  & Smoothed clf.
  & Rand.\ smoothing
  & (Certifies bound)
  & No \\[2pt]
\textbf{This work}
  & \textbf{Def.\ wrapper}
  & \textbf{Topology}
  & \textbf{Cont.\ + util.\ + compl.}
  & \textbf{Lean 4} \\
\midrule
\multicolumn{5}{@{}l}{\textit{Empirical findings consistent with our derived predictions}} \\
\midrule
\textbf{Our prediction} & \textbf{Theorem} & \multicolumn{2}{p{4.8cm}}{\textbf{Confirming evidence}} & \textbf{Pred.} \\
\midrule
Diminishing safety returns
  & Int.\ Stability
  & \multicolumn{2}{p{4.8cm}}{MART~\cite{ge2024mart}; Sleeper Agents~\cite{hubinger2024sleeper}}
  & \S\ref{sec:engineering} \\[2pt]
Long context $\to$ exp.\ harder defense
  & Cost Asymm.
  & \multicolumn{2}{p{4.8cm}}{Many-Shot Jailbreaking~\cite{anil2024many}; Kim et al.~\cite{kim2025manyshot}}
  & \S\ref{sec:engineering} \\[2pt]
Agentic tool-use degrades exp.
  & Pipeline Lip.
  & \multicolumn{2}{p{4.8cm}}{InjecAgent~\cite{zhan2024injecagent}; ASB~\cite{zhang2024asb}}
  & \S\ref{sec:engineering} \\[2pt]
Temperature boundary instab.
  & Stoch.\ Imposs.
  & \multicolumn{2}{p{4.8cm}}{Instability of Safety~\cite{yuan2025instability}}
  & \S\ref{sec:engineering} \\[2pt]
Quantization preserves deep vulns
  & Int.\ Stability
  & \multicolumn{2}{p{4.8cm}}{Skoltech~\cite{skoltech2025quant}; ETH~\cite{eth2025gguf}}
  & \S\ref{sec:engineering} \\[2pt]
Merging preserves weaker parent
  & Int.\ Stability
  & \multicolumn{2}{p{4.8cm}}{Hammoud et al.~\cite{hammoud2024merging}}
  & \S\ref{sec:engineering} \\[2pt]
Adv.\ training fragments
  & Fragment Size
  & \multicolumn{2}{p{4.8cm}}{Whack-a-Mole~\cite{liu2026whackamole}; IRIS~\cite{iris2025}}
  & \S\ref{sec:engineering} \\[2pt]
Asymmetric transfer
  & Transferability
  & \multicolumn{2}{p{4.8cm}}{Weak-to-Strong~\cite{zhao2025weak}; IRIS~\cite{iris2025}}
  & \S\ref{sec:engineering} \\[2pt]
Alignment tax $\propto$ unsafe vol.
  & Discrete Dil.
  & \multicolumn{2}{p{4.8cm}}{Safety Tax~\cite{huang2025safetytax}; Huang et al.~\cite{huang2026formal}}
  & \S\ref{sec:engineering} \\[2pt]
Patch blast radius
  & Patching Nonl.
  & \multicolumn{2}{p{4.8cm}}{Slingshot~\cite{slingshot2026}; Sleeper Agents~\cite{hubinger2024sleeper}}
  & \S\ref{sec:engineering} \\
\bottomrule
\end{tabular}
\end{table}

\section{Formal Framework}
\label{sec:setup}

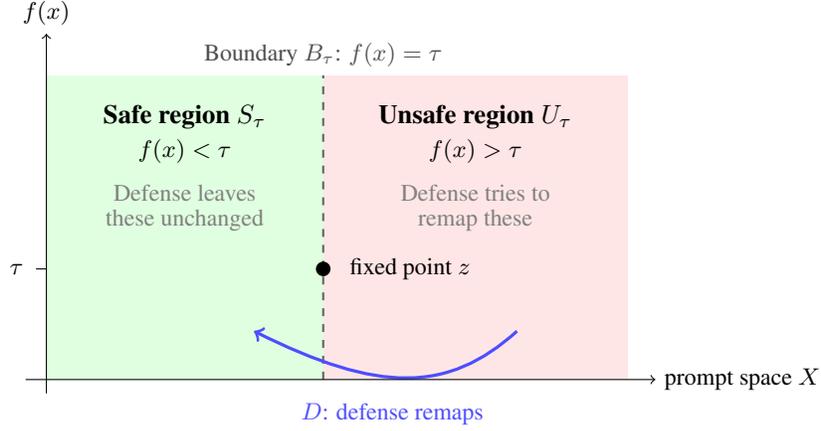
\begin{figure}[t]
\centering
\begin{tikzpicture}[scale=0.92]
  \fill[green!12] (0,0.8) rectangle (4.0,5.2);
  \fill[red!10] (4.0,0.8) rectangle (8.4,5.2);
  \draw[dashed, thick, black!60] (4.0,0.8) -- (4.0,5.2);
  \node[font=\bfseries] at (2.0,4.6) {Safe region $S_\tau$};
  \node[font=\small] at (2.0,4.1) {$f(x) < \tau$};
  \node[font=\small, black!50] at (2.0,3.5) {Defense leaves};
  \node[font=\small, black!50] at (2.0,3.1) {these unchanged};
  \node[font=\bfseries] at (6.2,4.6) {Unsafe region $U_\tau$};
  \node[font=\small] at (6.2,4.1) {$f(x) > \tau$};
  \node[font=\small, black!50] at (6.2,3.5) {Defense tries to};
  \node[font=\small, black!50] at (6.2,3.1) {remap these};
  \node[font=\small, black!70] at (4.0,5.5) {Boundary $B_\tau$: $f(x)=\tau$};
  \fill[black] (4.0,2.4) circle (3pt);
  \node[anchor=west, font=\small] at (4.25,2.4) {fixed point $z$};
  \draw[->, thick, blue!70, line width=1.2pt]
    (6.8,1.5) .. controls (5.8,0.6) and (4.8,0.6) .. (3.0,1.5);
  \node[font=\small, blue!70] at (5.0,0.3) {$D$: defense remaps};
  \draw[->] (-0.3,0.8) -- (8.8,0.8) node[right] {\small prompt space $X$};
  \draw[->] (0,0.6) -- (0,5.8) node[above] {\small $f(x)$};
  \draw (0,2.4) -- (-0.15,2.4);
  \node[anchor=east, font=\small] at (-0.2,2.4) {$\tau$};
\end{tikzpicture}
\caption{Schematic of the prompt space. The defense~$D$ must leave all
safe inputs unchanged (utility preservation) and tries to remap unsafe
inputs into the safe region. Boundary fixation (\Cref{thm:main}) proves
the defense must also leave some boundary points unchanged, the black
dot~$z$ passes through with no remediation.}
\label{fig:landscape}
\end{figure}

\begin{definition}[Alignment Deviation Function]
\label{def:ad}
Let $X$ be a topological space (a set equipped with a notion of
``open'' and ``closed'' sets, such that the whole space and the empty
set are both open and closed, arbitrary unions and finite intersections
of open sets are open, and a notion of ``neighborhoods'' of points---that
is, points ``close to'' a given point). An \emph{alignment deviation function} is
a continuous map $f\colon X \to \R$, i.e., small changes in the
input produce small changes in the score. Given a threshold
$\tau\in\R$:
\begin{align}
  S_\tau &= \{x \in X : f(x) < \tau\}  &&\text{(safe region)} \\
  U_\tau &= \{x \in X : f(x) > \tau\}  &&\text{(unsafe region)} \\
  B_\tau &= \{x \in X : f(x) = \tau\}  &&\text{(boundary)}
\end{align}
\end{definition}

\begin{definition}[Defense]
\label{def:defense}
A \emph{defense} is a continuous map $D\colon X \to X$. It is
\emph{utility-preserving} if $D(x) = x$ for all $x \in S_\tau$
(safe prompts pass through unchanged),
and \emph{complete} if $f(D(x)) < \tau$ for all $x\in X$
(every output is safe).
\end{definition}

The central question is whether a defense can be both
utility-preserving and complete. The following sections show that under
natural conditions the answer is no.

\section{Boundary Fixation}
\label{sec:boundary}

We begin with the most fundamental result. The argument fits in a
paragraph: the defense fixes every safe input, so continuity makes its
fixed-point set closed. But the safe region~$S_\tau$ is open (preimage
of an open interval under continuous~$f$), and in a connected space a
nonempty proper open set is not closed. Hence the fixed-point set
cannot stop exactly at the edge of the safe region, it must spill onto
the boundary~$B_\tau$. Some boundary prompts pass through unchanged.

\begin{theorem}[Boundary Fixation]
\label{thm:main}
Let $X$ be a connected Hausdorff space (a space that is ``in one
piece'' and where distinct points can be separated by neighborhoods).
Let $f\colon X\to\R$ be continuous with $S_\tau, U_\tau \neq
\emptyset$, and $D\colon X \to X$ continuous with
$D|_{S_\tau} = \id$. Then there exists $z\in X$ with $f(z) = \tau$
and $D(z) = z$. Moreover, every $z \in \cl{S_\tau} \setminus S_\tau$
satisfies $f(z) = \tau$ and $D(z) = z$, and this set is nonempty.
\end{theorem}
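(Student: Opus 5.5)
The plan is to prove the ``moreover'' part first, since it implies the existence claim. There are three ingredients: the fixed-point set of $D$ is closed, so it contains $\cl{S_\tau}$; points of $\cl{S_\tau}\setminus S_\tau$ lie on $B_\tau$; and connectedness forces this difference to be nonempty.

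\emph{Step 1 (closedness of the fixed-point set).} Set $\Fix(D)=\{x\in X : D(x)=x\}$. This is the preimage of the diagonal $\Delta\subseteq X\times X$ under the continuous map $x\mapsto (D(x),x)$. Since $X$ is Hausdorff, $\Delta$ is closed, so $\Fix(D)$ is closed. Utility preservation gives $S_\tau\subseteq\Fix(D)$, and therefore $\cl{S_\tau}\subseteq\Fix(D)$. In particular $D(z)=z$ for every $z\in\cl{S_\tau}\setminus S_\tau$.

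\emph{Step 2 (such points lie on the boundary).} By continuity of $f$, the set $\{x : f(x)\le\tau\}$ is closed and contains $S_\tau$, so it contains $\cl{S_\tau}$. Hence every $z\in\cl{S_\tau}$ has $f(z)\le\tau$. If in addition $z\notin S_\tau$, then $f(z)\ge\tau$, and so $f(z)=\tau$.

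\emph{Step 3 (nonemptiness).} The set $S_\tau=f^{-1}((-\infty,\tau))$ is open, and it is nonempty by hypothesis. It is also proper: any point of $U_\tau\neq\emptyset$ lies outside it. Suppose $\cl{S_\tau}\setminus S_\tau=\emptyset$. Then $S_\tau=\cl{S_\tau}$ would be a nonempty proper clopen subset of the connected space $X$, which is a contradiction. So the difference is nonempty, and any element $z$ of it satisfies $f(z)=\tau$ and $D(z)=z$, which gives the first assertion.

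None of the steps is deep. The only place needing care is Step 1, where the Hausdorff hypothesis is essential: in a non-Hausdorff space the fixed-point set of a continuous self-map need not be closed. I would cite or quickly reprove the standard fact that a space is Hausdorff if and only if its diagonal is closed. Note that Step 3 uses only $U_\tau\neq\emptyset$ (to make $S_\tau$ proper) and never uses properties of $D$.
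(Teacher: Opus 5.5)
Your proof is correct and follows essentially the same route as the paper's: $\Fix(D)$ is closed because the diagonal is closed in a Hausdorff space, and the squeeze $\tau \le f(z) \le \tau$ holds on $\cl{S_\tau}\setminus S_\tau$, which is nonempty because connectedness prevents $S_\tau$ from being clopen. The only difference is cosmetic: you justify $f(z)\le\tau$ through the closed set $\{f\le\tau\}$, whereas the paper just writes ``continuity gives $f(z)\le\tau$.''
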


\begin{proof}[Proof sketch]
In a Hausdorff space, $\Fix(D)$ is closed (preimage of the diagonal).
By utility preservation, $S_\tau \subseteq \Fix(D)$, so
$\cl{S_\tau} \subseteq \Fix(D)$. But $S_\tau = f^{-1}((-\infty,\tau))$
is open and not closed (connectedness: a nonempty proper clopen set
would disconnect~$X$). Hence $\cl{S_\tau} \supsetneq S_\tau$. Any
$z \in \cl{S_\tau}\setminus S_\tau$ satisfies $f(z) = \tau$ (limits of
values ${<}\tau$ cannot exceed~$\tau$, and $z \notin S_\tau$ forces
$f(z) \geq \tau$) and $D(z) = z$. Full proof in \Cref{app:proofs}.
\end{proof}

This means that the defense's fixed-point set is too large to avoid the
boundary. Utility preservation forces it to contain the safe region;
closure forces it to contain the boundary; connectedness ensures the
boundary is nonempty. Alternatively: a complete utility-preserving
defense would be a continuous retraction $D\colon X \to S_\tau$
(since $D|_{S_\tau} = \id$ and $D(X) \subseteq S_\tau$), but a
retract of a Hausdorff space is closed (the fixed-point set is
closed), while~$S_\tau$ is open and not closed
(connectedness)---a contradiction.

\begin{theorem}[Defense Trilemma]
\label{thm:trilemma}
Let $X$ be a connected Hausdorff space, $f\colon X \to \R$ continuous
with $S_\tau, U_\tau \neq \emptyset$. No $D\colon X \to X$ can
simultaneously be continuous, utility-preserving ($D|_{S_\tau} = \id$),
and complete ($f(D(x)) < \tau$ for all~$x$).
\end{theorem}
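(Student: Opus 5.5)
The plan is to argue by contradiction and reduce everything to \Cref{thm:main}. Suppose some $D\colon X\to X$ is continuous, utility-preserving, and complete. The hypotheses on $X$, $f$, and $\tau$ (connected Hausdorff, $f$ continuous, $S_\tau,U_\tau\neq\emptyset$) are exactly those of \Cref{thm:main}. Continuity together with $D|_{S_\tau}=\id$ then yields a point $z\in X$ with $f(z)=\tau$ and $D(z)=z$.

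Next, evaluate completeness at this point. Completeness demands $f(D(z))<\tau$, but $f(D(z))=f(z)=\tau$, which is a contradiction. Hence no map can have all three properties. If one prefers to avoid citing the ``moreover'' clause, it suffices to use the first conclusion of \Cref{thm:main}, namely the existence of a fixed boundary point.

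For self-containment, I would also record the alternative retraction phrasing given after \Cref{thm:main}. A complete, utility-preserving $D$ maps $X$ into $S_\tau$ and fixes $S_\tau$, so $S_\tau=D(X)\cap\Fix(D)=\Fix(D)$. Because $X$ is Hausdorff, $\Fix(D)$ is closed, so $S_\tau$ is closed. It is also open as $f^{-1}((-\infty,\tau))$, it is nonempty, and it is proper since $U_\tau\neq\emptyset$. A nonempty proper clopen set would disconnect $X$, which contradicts connectedness.

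There is essentially no obstacle once \Cref{thm:main} is available. The only point needing care is that the ``proper'' part of the clopen argument uses $U_\tau\neq\emptyset$, so that $S_\tau\neq X$. The ``Hausdorff $\Rightarrow$ $\Fix(D)$ closed'' step relies on continuity of $x\mapsto(D(x),x)$ and closedness of the diagonal, both of which are already used in \Cref{thm:main}.
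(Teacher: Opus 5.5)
Your proposal is correct and follows the paper's route: the trilemma is an immediate corollary of \Cref{thm:main}, since the fixed boundary point $z$ gives $f(D(z)) = f(z) = \tau$, which contradicts completeness. Your retraction argument ($S_\tau = \Fix(D)$ is closed, yet also open, nonempty and proper) is the alternative proof the paper sketches right after \Cref{thm:main}.
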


All three hypotheses are individually necessary. A defense can satisfy
at most two of the three---the \emph{defense trilemma}
(\Cref{fig:trilemma}). Counterexamples for each dropped hypothesis
appear in \Cref{app:counterexamples}.

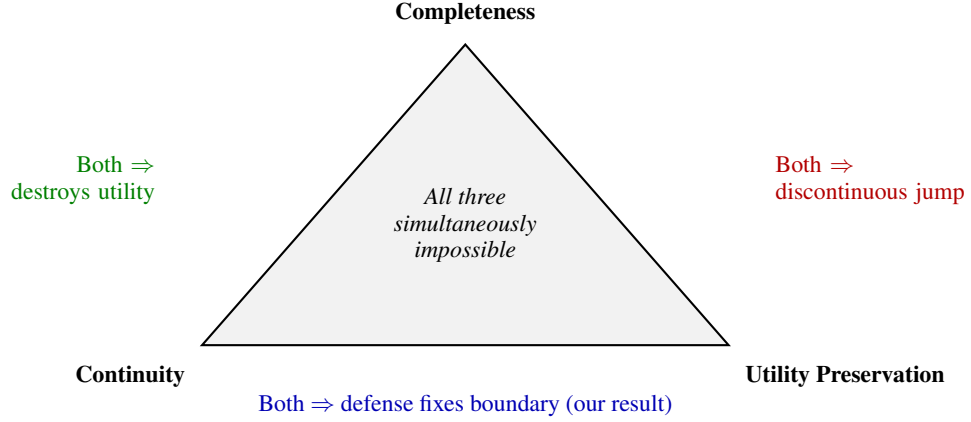
\begin{figure}[t]
\centering
\begin{tikzpicture}[scale=1.0]
  \coordinate (A) at (0,0);
  \coordinate (B) at (7,0);
  \coordinate (C) at (3.5,4.0);
  \fill[black!5] (A) -- (B) -- (C) -- cycle;
  \draw[thick] (A) -- (B) -- (C) -- cycle;
  \node[anchor=north east, font=\bfseries\small] at (-0.1,-0.15) {Continuity};
  \node[anchor=north west, font=\bfseries\small] at (7.1,-0.15) {Utility Preservation};
  \node[anchor=south, font=\bfseries\small] at (3.5,4.15) {Completeness};
  \node[font=\small, text=blue!70!black, align=center] at (3.5,-0.8)
    {Both $\Rightarrow$ defense fixes boundary (our result)};
  \node[font=\small, text=green!50!black, anchor=east, align=right,
        text width=3.2cm] at (-0.5,2.2)
    {Both $\Rightarrow$\\destroys utility};
  \node[font=\small, text=red!70!black, anchor=west, align=left,
        text width=3.2cm] at (7.5,2.2)
    {Both $\Rightarrow$\\discontinuous jump};
  \node[font=\small\itshape, align=center] at (3.5,1.6)
    {All three\\simultaneously\\impossible};
\end{tikzpicture}
\caption{The defense trilemma. Any continuous wrapper defense on a
connected space can satisfy at most two of
the three properties. The bottom edge is our main result; the other
two edges correspond to counterexamples in \Cref{app:counterexamples}.}
\label{fig:trilemma}
\end{figure}

\begin{remark}[Not all boundary points are fixed]
The theorem captures $\cl{S_\tau} \setminus S_\tau$, not all of
$B_\tau$. Boundary points not in $\cl{S_\tau}$ may escape fixation.
\end{remark}

\subsection{Relaxed Utility Preservation}
\label{sec:relaxed}

Strict utility preservation ($D(x) = x$ for safe~$x$) can be
relaxed. The impossibility survives score-preserving rewrites and
even approximate score preservation.

\begin{theorem}[Score-Preserving Defense]
\label{thm:score-preserving}
Let $X$ be a connected Hausdorff space, $f\colon X\to\R$ continuous
with $S_\tau, U_\tau \neq \emptyset$. If $D\colon X \to X$ is
continuous and \emph{score-preserving on safe inputs}:
$f(D(x)) = f(x)$ for all $x \in S_\tau$, then there exists $z$ with
$f(z) = \tau$ and $f(D(z)) = \tau$.
\end{theorem}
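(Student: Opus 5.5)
The plan is to mimic the proof of \Cref{thm:main}, with the fixed-point set $\Fix(D)$ replaced by the set where $D$ preserves the score. Define
\[
  E \;=\; \{x \in X : f(D(x)) = f(x)\},
\]
which is the preimage of the diagonal of $\R\times\R$ under the continuous map $x \mapsto (f(D(x)), f(x))$. Since $\R$ is Hausdorff, its diagonal is closed, and therefore $E$ is closed in $X$. This holds even without Hausdorffness of $X$; only continuity of $f$ and $D$ is used. Score preservation says exactly that $S_\tau \subseteq E$, so closedness gives $\cl{S_\tau} \subseteq E$.

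Next we produce a point of $\cl{S_\tau}\setminus S_\tau$, exactly as in \Cref{thm:main}. The set $S_\tau = f^{-1}((-\infty,\tau))$ is open, nonempty by hypothesis, and proper because $U_\tau \neq \emptyset$. If it were also closed, it would be a nonempty proper clopen subset of the connected space $X$, which is impossible. Hence $\cl{S_\tau} \setminus S_\tau \neq \emptyset$, and we pick $z$ in it.

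Then we verify $f(z) = \tau$. Since $\cl{S_\tau} \subseteq f^{-1}((-\infty,\tau])$, the latter being closed, we get $f(z) \le \tau$. Since $z \notin S_\tau$, we get $f(z) \ge \tau$. Finally, $z \in \cl{S_\tau} \subseteq E$ gives $f(D(z)) = f(z) = \tau$.

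No step is a real obstacle. The only point needing care is closedness of $E$: one must use the diagonal of $\R$ rather than of $X$, since $D(z)$ need not equal $z$ here. Alternatively, $E$ can be written as the zero set of the continuous function $f\circ D - f$. As a corollary, $D(z)$ lies in $B_\tau$, so $D$ is not complete; this recovers the trilemma for score-preserving defenses.
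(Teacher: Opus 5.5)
Your proposal is correct and follows essentially the same route as the paper: the paper takes the zero set of the continuous function $h = f\circ D - f$, observes that it is closed and contains $S_\tau$, and evaluates at a point of $\cl{S_\tau}\setminus S_\tau$, exactly as you do with $E$. Your remark that Hausdorffness of $X$ is not needed here is also right, since only the diagonal of $\R$ enters. Your argument for $\cl{S_\tau}\setminus S_\tau \neq \emptyset$ spells out the connectedness step that the paper's sketch inherits from \Cref{thm:main}.
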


\begin{proof}[Proof sketch]
Define $h = f \circ D - f$. Then $h$ is continuous and $h|_{S_\tau} = 0$.
The zero set $\{h = 0\}$ is closed and contains $\cl{S_\tau}$, since closed sets contain their boundary points. For
$z \in \cl{S_\tau} \setminus S_\tau$, $f(D(z)) = f(z) = \tau$.
\end{proof}

The next result weakens score preservation to approximate:

\begin{theorem}[$\varepsilon$-Relaxed Utility Preservation]
\label{thm:eps-relaxed}
Under the hypotheses of \Cref{thm:score-preserving}, if
$|f(D(x)) - f(x)| \leq \varepsilon$ for all $x \in S_\tau$, then
there exists $z$ with $f(z) = \tau$ and
$f(D(z)) \geq \tau - \varepsilon$.
\end{theorem}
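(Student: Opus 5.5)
The plan is to reuse the closure argument behind \Cref{thm:score-preserving}, replacing the closed zero set of $h = f\circ D - f$ by a closed sublevel/superlevel set. I will define the continuous function $h\colon X\to\R$ by $h(x) = f(D(x)) - f(x)$; it is continuous because it is built from the continuous maps $f$ and $D$. The hypothesis says $|h(x)|\le\varepsilon$ on $S_\tau$, and in particular $h(x)\ge -\varepsilon$ there. The set $C = \{x\in X : h(x)\ge -\varepsilon\} = h^{-1}([-\varepsilon,\infty))$ is closed, being the preimage of a closed set. Since $S_\tau\subseteq C$, it follows that $\cl{S_\tau}\subseteq C$.

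Next, I will produce a point $z\in\cl{S_\tau}\setminus S_\tau$ exactly as in the proof of \Cref{thm:main}. The set $S_\tau = f^{-1}((-\infty,\tau))$ is open and nonempty. It is proper, because $U_\tau\ne\emptyset$ and $U_\tau$ is disjoint from $S_\tau$. By connectedness of $X$, $S_\tau$ is therefore not closed, so $\cl{S_\tau}\setminus S_\tau\neq\emptyset$. For such a $z$, the value $f(z)$ is at most $\tau$, since $\cl{S_\tau}\subseteq f^{-1}((-\infty,\tau])$, the latter being a closed set containing $S_\tau$. It is also at least $\tau$, since $z\notin S_\tau$. Hence $f(z)=\tau$.

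Combining the two steps, $z\in C$ gives $f(D(z)) - \tau = h(z) \ge -\varepsilon$, that is, $f(D(z))\ge \tau-\varepsilon$. Running the same argument on $h^{-1}((-\infty,\varepsilon])$ also gives $f(D(z))\le\tau+\varepsilon$, although the statement does not require it. I do not expect any step to be a real obstacle: the only point needing care is that $C$ is closed, which requires the non-strict inequality. The Hausdorff hypothesis is not even needed here, because we never use closedness of $\Fix(D)$. In the case $\varepsilon=0$, the result recovers \Cref{thm:score-preserving}.
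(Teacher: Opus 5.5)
Your proposal is correct and matches the paper's proof: the closed superlevel set $\{h \geq -\varepsilon\}$ of $h = f\circ D - f$ contains $S_\tau$ and hence $\cl{S_\tau}$, and any $z \in \cl{S_\tau}\setminus S_\tau$ (nonempty by connectedness, with $f(z)=\tau$) gives $f(D(z)) \geq \tau - \varepsilon$. Your side remarks also hold: Hausdorffness of $X$ is never used, and the set $\{h \leq \varepsilon\}$ gives the two-sided bound $|f(D(z)) - \tau| \leq \varepsilon$, which is what recovers \Cref{thm:score-preserving} at $\varepsilon = 0$.
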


\begin{proof}[Proof sketch]
The set $\{x : h(x) \geq -\varepsilon\}$ is closed and contains
$S_\tau$, hence $\cl{S_\tau}$. For
$z \in \cl{S_\tau} \setminus S_\tau$:
$f(D(z)) = \tau + h(z) \geq \tau - \varepsilon$.
\end{proof}

\begin{remark}[Why $D(S_\tau) \subseteq S_\tau$ alone is insufficient]
\label{rem:safe-preserving-escape}
The weakest relaxation---$D(S_\tau) \subseteq S_\tau$ with no score
constraint---does allow a complete defense (e.g., a constant map
$D(x) = x_0$ to a fixed safe point). But this destroys all semantic
content: every prompt produces the same response. The score-preservation
conditions formalize the requirement that defense must not destroy
utility, without requiring the defense to be the identity.
\end{remark}

\section{$\varepsilon$-Robust Constraint}
\label{sec:eps-robust}

Boundary fixation produces at least one fixed point; Lipschitz
regularity makes the failure spread to a neighborhood.

\begin{theorem}[$\varepsilon$-Robust Defense Constraint]
\label{thm:eps-robust}
Under the hypotheses of \Cref{thm:main}, if $(X,d)$ is a metric
space, $f$ is $L$-Lipschitz, and $D$ is $K$-Lipschitz, then for
the fixed boundary point~$z$:
\begin{equation}
  f(D(x)) \;\geq\; \tau - LK\dist(x, z)
  \qquad \text{for all } x \in X.
\end{equation}
Points within distance~$\delta$ of~$z$ remain within
$LK\delta$ of threshold.
\end{theorem}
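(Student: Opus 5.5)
The plan is to combine the fixed boundary point from \Cref{thm:main} with the two Lipschitz bounds in a single chain of inequalities. First, I invoke \Cref{thm:main}: a metric space is Hausdorff, so its hypotheses hold and it supplies $z\in\cl{S_\tau}\setminus S_\tau$ with $f(z)=\tau$ and $D(z)=z$. This $z$ is the point the statement calls ``the fixed boundary point.''

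Next, I fix an arbitrary $x\in X$ and bound $f(D(x))$ from below by comparing it with $f(D(z))$. Since $D(z)=z$, we have $f(D(z))=f(z)=\tau$. The $L$-Lipschitz property of $f$ gives
\[
|f(D(x)) - f(D(z))| \le L\, d(D(x),D(z)).
\]
The $K$-Lipschitz property of $D$ then gives $d(D(x),D(z)) \le K\, d(x,z) = K\dist(x,z)$. Combining the two,
\[
f(D(x)) \;\ge\; f(D(z)) - LK\dist(x,z) \;=\; \tau - LK\dist(x,z).
\]
This is the claimed inequality. The closing sentence of the statement follows immediately: if $\dist(x,z)\le\delta$, then $f(D(x))\ge \tau - LK\delta$.

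There is no real obstacle. The only point that needs care is that the argument genuinely uses $D(z)=z$ and not merely $f(z)=\tau$. Without the fixed-point property we would only know $f(D(z))$ somewhere in the image of $D$, and the anchor value $\tau$ would be lost. This is why the result depends on the ``moreover'' clause of \Cref{thm:main}: every point of $\cl{S_\tau}\setminus S_\tau$ is both on the boundary and fixed by $D$. I would also note that the bound holds at every such $z$, not just one, so it can be applied with $z$ ranging over $\cl{S_\tau}\setminus S_\tau$.
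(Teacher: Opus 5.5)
Your proof is correct and follows the paper's argument: you anchor at the fixed boundary point $z$ from \Cref{thm:main}, bound $\dist(D(x),z)=\dist(D(x),D(z))\le K\dist(x,z)$, and then apply the $L$-Lipschitz bound on $f$ together with $f(z)=\tau$. Your absolute-value inequality in fact gives the two-sided bound $|f(D(x))-\tau|\le LK\dist(x,z)$, which is what the paper derives and what ``within $LK\delta$ of threshold'' literally asserts, so you may as well state the upper bound as well.
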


\begin{proof}[Proof sketch]
Since $D(z)=z$ and $D$ is $K$-Lipschitz:
$\dist(D(x),z) \leq K\dist(x,z)$. Since $f(z)=\tau$ and $f$ is
$L$-Lipschitz:
$|f(D(x)) - \tau| \leq L \cdot K\dist(x,z)$.
Full proof in \Cref{app:proofs}.
\end{proof}

\begin{theorem}[Positive-Measure $\varepsilon$-Band]
\label{thm:band-measure}
Under the hypotheses of \Cref{thm:eps-robust}, if $X$ is connected
and $f$ takes values below $\tau - \varepsilon$ for some
$\varepsilon > 0$, then the band
$\mathcal{B}_\varepsilon = \{x : \tau - \varepsilon \leq f(x)
\leq \tau\}$ has positive measure (under any measure positive on
nonempty open sets).
Specifically, $B(c, \varepsilon/(4L)) \subseteq
\mathcal{B}_\varepsilon$ for the midpoint $c$ with
$f(c) = \tau - \varepsilon/2$ (which exists by the intermediate
value theorem).
\end{theorem}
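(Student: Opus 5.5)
The plan is to produce an explicit ball inside the band and then use the assumption that the measure is positive on nonempty open sets.

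\textbf{Step 1: find the midpoint $c$.} By hypothesis there is a point $a$ with $f(a) < \tau - \varepsilon$. By \Cref{thm:main} there is a point $z$ with $f(z) = \tau$; alternatively, any point of $U_\tau$ has $f > \tau$. Since $X$ is connected and $f$ is continuous, $f(X)$ is a connected subset of $\R$, hence an interval. This interval contains $f(a) < \tau - \varepsilon/2 < \tau \le f(z)$, so it contains $\tau - \varepsilon/2$. This is the intermediate value step: it gives $c$ with $f(c) = \tau - \varepsilon/2$. It uses only connectedness and needs no path-connectedness.

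\textbf{Step 2: show the ball lies in the band.} Take $r = \varepsilon/(4L)$, assuming $L > 0$. If $L = 0$, then $f$ is constant, which contradicts $S_\tau, U_\tau \neq \emptyset$. For $x \in B(c, r)$, the $L$-Lipschitz property gives
\[
|f(x) - f(c)| \le L\,\dist(x,c) < \varepsilon/4 .
\]
Hence
\[
\tau - \tfrac{3\varepsilon}{4} < f(x) < \tau - \tfrac{\varepsilon}{4},
\]
which lies strictly inside $[\tau - \varepsilon, \tau]$. So $B(c, \varepsilon/(4L)) \subseteq \mathcal{B}_\varepsilon$. The factor $1/4$ is generous; any radius below $\varepsilon/(2L)$ would also work.

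\textbf{Step 3: conclude positive measure.} An open ball in a metric space is open and contains its center, so it is nonempty. By monotonicity,
\[
\mu(\mathcal{B}_\varepsilon) \ge \mu(B(c,\varepsilon/(4L))) > 0 .
\]
Measurability is not an issue: $\mathcal{B}_\varepsilon = f^{-1}([\tau-\varepsilon,\tau])$ is closed, hence Borel.

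There is no real obstacle here. The only delicate point is Step 1, where we need a value at least $\tau - \varepsilon/2$ in the image; \Cref{thm:main} or $U_\tau \neq \emptyset$ supplies it. We also need the degenerate case $L = 0$ excluded so the radius is well defined. The $K$-Lipschitz hypothesis on $D$ from \Cref{thm:eps-robust} is not needed for this statement. It becomes relevant only if one also wants to pair the band with the bound $f(D(x)) \ge \tau - LK\dist(x,z)$.
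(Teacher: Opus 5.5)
Your proposal is correct and follows the paper's proof. The paper likewise places the ball $B(c,\varepsilon/(4L))$ inside $\mathcal{B}_\varepsilon$ using the Lipschitz estimate $|f(y)-f(c)|\le\varepsilon/4$ around $c$ with $f(c)=\tau-\varepsilon/2$, and concludes positive measure because a nonempty open ball has positive measure; your explicit justifications of the existence of $c$ (the image $f(X)$ is an interval containing $f(a)<\tau-\varepsilon$ and $f(z)=\tau$), of $L>0$, and of measurability fill in details the paper leaves implicit.
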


\begin{remark}[Defense behavior on the $\varepsilon$-band]
By \Cref{thm:main}, $\cl{S_\tau} \subseteq \Fix(D)$. Every point in $\mathcal{B}_\varepsilon$ with $f(x) < \tau$ is safe and therefore fixed by utility preservation; every point with $f(x) = \tau$ in
  $\cl{S_\tau}$ is fixed by boundary fixation. On both subsets $f(D(x)) = f(x) \in [\tau - \varepsilon,\, \tau]$. The remainder---boundary points outside $\cl{S_\tau}$---is contained in the level set $f^{-1}(\tau)$,
  which has measure zero when $f$ is Lipschitz on~$\R^n$ (by the coarea formula; not formalized in the Lean artifact).
\end{remark}

\begin{proof}[Proof sketch]
For $y \in B(c, \varepsilon/(4L))$: $|f(y) - f(c)| \leq L \cdot \varepsilon/(4L) = \varepsilon/4$. Since $f(c) = \tau - \varepsilon/2$, we get $\tau - 3\varepsilon/4 \leq f(y) \leq \tau - \varepsilon/4$, so $y \in \mathcal{B}_\varepsilon$.
\end{proof}

\section{Persistent Unsafe Region}
\label{sec:persistent}

The $\varepsilon$-robust constraint bounds the depth to which the
defense pushes near-boundary points, but permits values slightly
below~$\tau$. When the alignment surface rises faster than the
defense can pull it down, some points remain above threshold.

\paragraph{Decoupling global and directional Lipschitz constants.}
The $\varepsilon$-robust bound (\Cref{thm:eps-robust}) uses the
\emph{global} Lipschitz constant~$L$ of~$f$, which bounds~$f$
uniformly in all directions. The persistence argument, however,
compares $f$'s growth rate in the \emph{steep direction} to how much
the defense can reduce~$f$ along the \emph{displacement direction}
$D(x) - x$. In anisotropic settings these directions differ: $f$ may
rise steeply toward the unsafe region while varying slowly in the
direction the defense pulls.

We write $\ell$ for the \emph{defense-path Lipschitz constant}:
\[
  \ell \;=\; \sup_{x \neq D(x)}
    \frac{|f(D(x)) - f(x)|}{\dist(D(x),\, x)}
\]
(with $\ell = 0$ when $D = \id$, i.e., the supremum over the empty
set is taken as~$0$).
Since $f$ is $L$-Lipschitz globally, $\ell \leq L$. When the
alignment surface is \textbf{isotropic} ($\ell = L$), the steep region
is empty for every $K \geq 0$ (verified in Lean as
\texttt{shallow\_boundary\_no\_persistence}). The result is
non-vacuous precisely when the surface is \textbf{anisotropic}:
$\ell < L$, with directional gradient~$G$ satisfying
$G > \ell(K+1)$.

\begin{lemma}[Input-Relative Bound]
\label{lem:input-bound}
Under the hypotheses of \Cref{thm:eps-robust}, if $f$ has
defense-path Lipschitz constant~$\ell$, then
$f(D(x)) \geq f(x) - \ell(K+1)\dist(x,z)$
for all $x \in X$.
\end{lemma}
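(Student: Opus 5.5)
The plan is to apply the definition of the defense-path Lipschitz constant~$\ell$ directly, and then bound the displacement $\dist(D(x),x)$ with the triangle inequality through the fixed boundary point~$z$ supplied by \Cref{thm:main}. I would not try to compare $f(D(x))$ with~$\tau$, as \Cref{thm:eps-robust} does. Instead I compare it with $f(x)$, which is what makes the resulting bound ``input-relative.''

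\textbf{Step 1 (trivial case).} Fix $x \in X$. If $D(x) = x$, then $f(D(x)) = f(x)$. The claim holds because $\ell(K+1)\dist(x,z) \geq 0$, using $\ell \geq 0$ (it is a supremum of nonnegative quantities, or $0$ by convention). This case also covers $D = \id$.

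\textbf{Step 2 (displacement case).} If $D(x) \neq x$, then $x$ belongs to the index set of the supremum defining~$\ell$. Hence
\[
  |f(D(x)) - f(x)| \leq \ell\,\dist(D(x),x),
\]
which gives $f(D(x)) \geq f(x) - \ell\,\dist(D(x),x)$.

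\textbf{Step 3 (bound the displacement).} Take the fixed boundary point $z$ with $D(z) = z$ and $f(z) = \tau$ from \Cref{thm:main}. The triangle inequality and the $K$-Lipschitz property of~$D$ give
\[
  \dist(D(x),x) \leq \dist(D(x),D(z)) + \dist(z,x) \leq K\dist(x,z) + \dist(x,z) = (K+1)\dist(x,z).
\]
Multiplying by $\ell \geq 0$ and substituting into Step~2 gives the stated inequality.

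The only real subtlety is well-definedness. We need $\ell$ to be finite so that the supremum bound in Step~2 is legitimate; this holds because $\ell \leq L$ by the $L$-Lipschitz property of~$f$. We also need $\ell \geq 0$ so that multiplying the displacement bound by~$\ell$ preserves the direction of the inequality. In a formalization, the main obstacle would be handling the supremum over a possibly empty set, which is why the convention $\ell = 0$ matters. Mathematically, the argument is just the two inequalities above.
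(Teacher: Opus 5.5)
Your proposal is correct and matches the paper's argument: bound the displacement $\dist(D(x),x) \leq (K+1)\dist(x,z)$ by the triangle inequality through the fixed point $z$, then apply the defense-path Lipschitz bound. Your separate treatment of the case $D(x)=x$, and your check that $\ell$ is finite and nonnegative, make explicit details that the paper's sketch leaves implicit.
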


\begin{proof}[Proof sketch]
Triangle inequality:
$\dist(D(x), x) \leq \dist(D(x), z) + \dist(z, x)
\leq (K+1)\dist(x, z)$.
Defense-path Lipschitz: $|f(D(x)) - f(x)| \leq \ell\dist(D(x), x)
\leq \ell(K+1)\dist(x, z)$.
\end{proof}

This means that the defense can reduce any point's score by at most
$\ell(K+1)$ times its distance from~$z$. If the score rises faster
than that, the defense loses.

\begin{definition}[Steep region]
\label{def:steep}
Given a fixed boundary point $z$, the \emph{steep region} is
$\mathcal{S} = \{x \in X : f(x) > \tau + \ell(K+1)\dist(x, z)\}$---the
set of points where alignment deviation exceeds~$\tau$ by more than
the defense's Lipschitz budget can compensate.
\end{definition}

\begin{theorem}[Persistent Unsafe Region]
\label{thm:persistent}
Let $X$ be a connected Hausdorff metric space, $f$ continuous and
$L$-Lipschitz, $D$ continuous and $K$-Lipschitz with
$D|_{S_\tau} = \id$, and $S_\tau, U_\tau \neq \emptyset$.
Let $z \in \cl{S_\tau} \setminus S_\tau$ be the fixed boundary
point from \Cref{thm:main}, and let $\ell$ be the defense-path
Lipschitz constant. If $\mathcal{S} \neq \emptyset$, then:
\begin{enumerate}[nosep]
  \item $\mathcal{S}$ is open.
  \item $\mathcal{S}$ has positive measure (under any measure positive
    on nonempty open sets).
  \item For every $x \in \mathcal{S}$: $f(D(x)) > \tau$.
\end{enumerate}
The defense leaves a positive-measure region that remains unsafe.
\end{theorem}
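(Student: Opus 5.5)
We take $z$ as supplied by \Cref{thm:main}, so $f(z)=\tau$ and $D(z)=z$, and prove the three claims in the order (1), (3), (2). The main tools are continuity of the function $g(x) = f(x) - \tau - \ell(K+1)\dist(x,z)$ and \Cref{lem:input-bound}.

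For (1), observe that $\mathcal{S} = g^{-1}((0,\infty))$. Here $f$ is continuous by hypothesis and $x \mapsto \dist(x,z)$ is $1$-Lipschitz, hence continuous. So $g$ is continuous, and $\mathcal{S}$ is the preimage of an open set, hence open.

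If one wants an explicit neighborhood rather than the abstract argument, a quantitative version is available. For $x \in \mathcal{S}$ set $\eta = g(x) > 0$. For $y$ with $\dist(x,y) < r$ we have
\[
g(y) \;\ge\; g(x) - L\,r - \ell(K+1)\,r,
\]
so the ball $B(x, \eta/(L+\ell(K+1)))$ lies in $\mathcal{S}$. Either form suffices.

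For (3), take $x \in \mathcal{S}$ and apply \Cref{lem:input-bound}:
\[
f(D(x)) \;\ge\; f(x) - \ell(K+1)\dist(x,z) \;>\; \tau,
\]
where the strict inequality is exactly the defining condition of $\mathcal{S}$. The lemma needs $\dist(D(x),z) \le K\dist(x,z)$, which uses $D(z)=z$, and it needs the defense-path bound $|f(D(x))-f(x)| \le \ell\,\dist(D(x),x)$. That bound holds trivially when $D(x)=x$ and by definition of $\ell$ otherwise. If $\ell$ were $+\infty$ the steep region would be empty, but $\ell \le L$, so $\ell$ is finite and the claim is well posed.

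For (2), $\mathcal{S}$ is nonempty by hypothesis and open by (1). Any measure that is positive on nonempty open sets therefore assigns it positive measure. This step is immediate.

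The only real subtlety is ensuring that $z$ is genuinely a fixed point with $f(z)=\tau$, so that \Cref{lem:input-bound} applies; this is exactly what \Cref{thm:main} provides for $z \in \cl{S_\tau}\setminus S_\tau$. A second minor point is that in (1) the continuity of $f$ must be used directly and not only through its Lipschitz constant, though both are available. Beyond these checks I expect no genuine obstacle: the content of the theorem is carried by the hypothesis $\mathcal{S}\neq\emptyset$, which is where the transversality condition $G > \ell(K+1)$ would enter in applications.
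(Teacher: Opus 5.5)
Your proposal is correct and matches the paper's proof essentially step for step: $\mathcal{S}$ is open as a strict superlevel set of the continuous map $x \mapsto f(x) - \ell(K+1)\dist(x,z)$, it has positive measure because it is open and nonempty, and $f(D(x)) > \tau$ follows directly from \Cref{lem:input-bound}, which only needs $D(z)=z$. The one slip is in your optional quantitative ball: bounding by $r$ only gives $g(y) \ge 0$, so write $g(y) \ge g(x) - (L+\ell(K+1))\dist(x,y) > 0$ instead, noting that $L>0$ because $f$ is nonconstant.
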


\begin{proof}[Proof sketch]
$\mathcal{S}$ is a strict superlevel set of a continuous function,
hence open. For $x \in \mathcal{S}$:
$f(D(x)) \geq f(x) - \ell(K+1)\dist(x,z) > \tau$
by \Cref{lem:input-bound} and the definition of~$\mathcal{S}$.
Full proof in \Cref{app:proofs}.
\end{proof}

When does the steep region exist? Whenever the alignment surface has
directional slope exceeding~$\ell(K+1)$ at the boundary:

\begin{proposition}[Transversality from Directional Derivative]
\label{prop:transversality}
In a normed space, if $f$ has Fr\'{e}chet derivative $f'$ at boundary
point $z$ with directional value $f'(v) > \ell(K+1)$ along a unit
vector $v$, then $z + tv \in \mathcal{S}$ for sufficiently small
$t > 0$. If $\|f'\| > \ell(K+1)$, such a $v$ exists by operator-norm
near-attainment. Verified in Lean as
\texttt{gradient\_norm\_implies\_steep\_nonempty}
(\texttt{MoF\_21\_GradientChain}), which derives the local growth
bound from \texttt{HasFDerivAt} rather than assuming it.
\end{proposition}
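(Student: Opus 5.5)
We prove the two claims of \Cref{prop:transversality} separately: first the local steepness claim along a fixed direction $v$, then the existence of such a $v$ from the norm condition.

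\textbf{Steepness along $v$.} Let $c = f'(v) - \ell(K+1) > 0$. Fréchet differentiability at $z$ gives, for $\eta = c/2$, a radius $r>0$ such that
\[
|f(z+h) - f(z) - f'(h)| \le \eta\|h\| \quad \text{whenever } \|h\| < r.
\]
Take $h = tv$ with $0 < t < r$. Since $\|v\| = 1$, we have $\|h\| = t$, and using $f(z) = \tau$ we get
\[
f(z+tv) \ge \tau + t f'(v) - \eta t = \tau + t\bigl(\ell(K+1) + c/2\bigr).
\]
In a normed space $\dist(z+tv, z) = \|tv\| = t$, so
\[
f(z+tv) > \tau + \ell(K+1)\,t = \tau + \ell(K+1)\dist(z+tv,z).
\]
This is exactly the membership condition of \Cref{def:steep}, so $z+tv \in \mathcal{S}$ for every $t \in (0,r)$.

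\textbf{Existence of $v$.} Set $M = \ell(K+1)$ and suppose $\|f'\| > M$. By the definition of the operator norm as a supremum over the unit sphere, there is $w$ with $\|w\| = 1$ and $|f'(w)| > M$. This step needs the space to be nontrivial, which holds automatically here: $\|f'\| > M \ge 0$ forces $f' \neq 0$.

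If $f'(w) < 0$, replace $w$ by $v = -w$. Linearity gives $f'(v) = -f'(w) = |f'(w)| > M$, and $\|v\| = 1$. If $f'(w) > 0$, take $v = w$.

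Either way we have a unit vector $v$ with $f'(v) > M$, and the steepness step applies.

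The argument is short, so the main obstacle is the bookkeeping rather than any idea. We must make the little-$o$ constant strictly smaller than the gap $c$ so that the final inequality is strict. We must also handle the sign via the replacement $v = -w$. Finally, the hypotheses only supply $\ell \le L$ and $\ell \ge 0$, which is what guarantees $M \ge 0$ and hence $f' \ne 0$.
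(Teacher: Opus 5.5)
Your proof is correct and takes essentially the same route as the paper's Lean-verified argument in \texttt{MoF\_21\_GradientChain}. That argument extracts a unit direction from the operator norm, flipping its sign if necessary. It then applies the Fr\'{e}chet little-$o$ bound with slack $\eta < f'(v)-\ell(K+1)$ to obtain $f(z+tv) > \tau + \ell(K+1)\,t = \tau + \ell(K+1)\dist(z+tv,z)$ for all sufficiently small $t>0$, and your edge-case bookkeeping ($\ell(K+1)\ge 0$ forcing $f'\neq 0$, so the space is nontrivial) is handled correctly.
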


This condition is observed empirically in the two models with
$U_\tau \neq \emptyset$ (\Cref{sec:experiments}): the alignment
surface rises steeply toward the unsafe region ($G$ large) while the
defense operates in a smoother subspace ($\ell \ll L$).

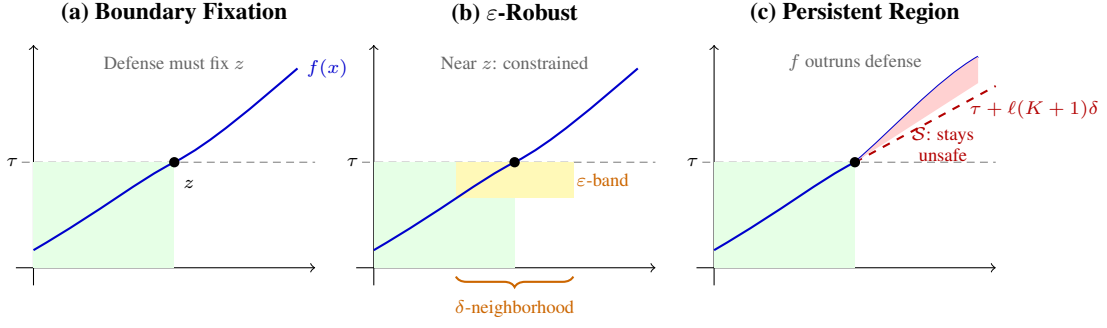
\begin{figure}[t]
\centering
\begin{tikzpicture}[scale=0.78]
  \begin{scope}[shift={(0,0)}]
    \draw[->] (-0.3,0) -- (4.8,0);
    \draw[->] (0,-0.3) -- (0,3.8);
    \draw[densely dashed, black!50] (-0.2,1.8) -- (4.8,1.8);
    \node[left, font=\scriptsize] at (-0.1,1.8) {$\tau$};
    \fill[green!10] (0,0) rectangle (2.4,1.8);
    \draw[thick, blue!80!black] (0,0.3) .. controls (1.2,1.0)
      and (1.8,1.5) .. (2.4,1.8) .. controls (3.0,2.1)
      and (3.8,2.8) .. (4.5,3.4);
    \node[right, font=\scriptsize, blue!80!black] at (4.5,3.4) {$f(x)$};
    \fill[black] (2.4,1.8) circle (2.5pt);
    \node[below right, font=\scriptsize] at (2.4,1.65) {$z$};
    \node[font=\small\bfseries, anchor=south] at (2.4,4.0)
      {(a) Boundary Fixation};
    \node[font=\scriptsize, text=black!60, align=center] at (2.4,3.5)
      {Defense must fix $z$};
  \end{scope}

  \begin{scope}[shift={(5.8,0)}]
    \draw[->] (-0.3,0) -- (4.8,0);
    \draw[->] (0,-0.3) -- (0,3.8);
    \draw[densely dashed, black!50] (-0.2,1.8) -- (4.8,1.8);
    \node[left, font=\scriptsize] at (-0.1,1.8) {$\tau$};
    \fill[green!10] (0,0) rectangle (2.4,1.8);
    \fill[yellow!35] (1.4,1.2) rectangle (3.4,1.8);
    \draw[thick, blue!80!black] (0,0.3) .. controls (1.2,1.0)
      and (1.8,1.5) .. (2.4,1.8) .. controls (3.0,2.1)
      and (3.8,2.8) .. (4.5,3.4);
    \fill[black] (2.4,1.8) circle (2.5pt);
    \draw[decorate, decoration={brace, amplitude=4pt, mirror},
          thick, orange!80!black]
      (1.4,-0.15) -- (3.4,-0.15);
    \node[below, font=\scriptsize, orange!80!black] at (2.4,-0.4)
      {$\delta$-neighborhood};
    \node[font=\scriptsize, orange!80!black] at (3.9,1.5)
      {$\varepsilon$-band};
    \node[font=\small\bfseries, anchor=south] at (2.4,4.0)
      {(b) $\varepsilon$-Robust};
    \node[font=\scriptsize, text=black!60, align=center] at (2.4,3.5)
      {Near $z$: constrained};
  \end{scope}

  \begin{scope}[shift={(11.6,0)}]
    \draw[->] (-0.3,0) -- (4.8,0);
    \draw[->] (0,-0.3) -- (0,3.8);
    \draw[densely dashed, black!50] (-0.2,1.8) -- (4.8,1.8);
    \node[left, font=\scriptsize] at (-0.1,1.8) {$\tau$};
    \fill[green!10] (0,0) rectangle (2.4,1.8);
    \draw[dashed, red!70!black, line width=0.8pt]
      (2.4,1.8) -- (4.8,3.1);
    \node[right, font=\scriptsize, red!70!black] at (4.2,2.7)
      {$\tau + \ell(K+1)\delta$};
    \draw[thick, blue!80!black] (0,0.3) .. controls (1.2,1.0)
      and (1.8,1.5) .. (2.4,1.8) .. controls (3.0,2.3)
      and (3.8,3.2) .. (4.5,3.6);
    \fill[red!18] (2.4,1.8) .. controls (3.0,2.3) and (3.8,3.2)
      .. (4.5,3.6) -- (4.5,3.15) -- (2.4,1.8) -- cycle;
    \fill[black] (2.4,1.8) circle (2.5pt);
    \node[font=\scriptsize, red!70!black, align=center] at (3.9,2.1)
      {$\mathcal{S}$: stays\\unsafe};
    \node[font=\small\bfseries, anchor=south] at (2.4,4.0)
      {(c) Persistent Region};
    \node[font=\scriptsize, text=black!60, align=center] at (2.4,3.5)
      {$f$ outruns defense};
  \end{scope}
\end{tikzpicture}
\caption{The three impossibility results on a 1D cross-section.
\textbf{(a)}~The defense must fix boundary point~$z$ (Thm.~\ref{thm:main}).
\textbf{(b)}~Near~$z$, Lipschitz regularity constrains the defense to
a shallow $\varepsilon$-band (yellow; Thm.~\ref{thm:eps-robust}).
\textbf{(c)}~Where $f$ rises faster than the defense budget
$\ell(K+1)\delta$ (dashed red), the region above~$\tau$ persists
(red shading; Thm.~\ref{thm:persistent}).}
\label{fig:escalation}
\end{figure}

\section{Quantitative Bounds}
\label{sec:quantitative}

The preceding results establish that failures \emph{exist} and have
positive measure. This section provides
explicit lower bounds and identifies a fundamental dilemma in choosing
the defense's aggressiveness.

\begin{theorem}[Volume Lower Bound]
\label{thm:coarea}
Let $f\colon \R^n \to \R$ be $L$-Lipschitz with $L > 0$, and let
$\mu$ denote Lebesgue measure. If there exists $c$ with
$f(c) = \tau - \varepsilon/2$, then
$B(c,\, \varepsilon/(4L)) \subseteq
\mathcal{B}_\varepsilon$, giving:
\begin{equation}
  \mu(\mathcal{B}_\varepsilon)
  \;\geq\; V_n \cdot \left(\frac{\varepsilon}{4L}\right)^{\!n}
\end{equation}
where $V_n$ is the volume of the unit ball in $\R^n$. In $\R^1$,
this simplifies to $\mu(\mathcal{B}_\varepsilon) \geq \varepsilon/(2L)$.
\end{theorem}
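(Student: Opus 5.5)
The argument is the Lipschitz-ball containment of \Cref{thm:band-measure}, made quantitative with Lebesgue measure. Fix $c$ with $f(c) = \tau - \varepsilon/2$ and set $r = \varepsilon/(4L)$, which is well defined and positive because $L > 0$ and $\varepsilon > 0$. For any $y$ with $\|y - c\| < r$, the Lipschitz property gives $|f(y) - f(c)| \leq L r = \varepsilon/4$. Hence
\[
\tau - \tfrac{3\varepsilon}{4} \;\leq\; f(y) \;\leq\; \tau - \tfrac{\varepsilon}{4},
\]
which lies inside $[\tau-\varepsilon,\tau]$, so $y \in \mathcal{B}_\varepsilon$. This proves $B(c,r) \subseteq \mathcal{B}_\varepsilon$. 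It makes no difference whether the ball is taken open or closed, since the two have the same Lebesgue measure.

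The volume bound then follows from monotonicity of $\mu$ and the scaling law for Lebesgue measure. The band $\mathcal{B}_\varepsilon = f^{-1}([\tau-\varepsilon,\tau])$ is closed because $f$ is continuous, so it is measurable. Monotonicity gives $\mu(\mathcal{B}_\varepsilon) \geq \mu(B(c,r))$. Lebesgue measure is translation invariant, and the dilation $x \mapsto rx$ multiplies measure by $r^n$, so $\mu(B(c,r)) = r^n\,\mu(B(0,1)) = V_n (\varepsilon/(4L))^n$. In $\R^1$ the unit ball is $(-1,1)$, so $V_1 = 2$ and the bound reads $2\cdot\varepsilon/(4L) = \varepsilon/(2L)$.

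None of these steps is a real obstacle; the containment is a two-line estimate. The only care needed is bookkeeping. First, the hypothesis that such a $c$ exists is assumed outright here; in the setting of \Cref{thm:band-measure} it would come from the intermediate value theorem, but we do not need that. Second, the constant $V_n$ must be identified with the Lebesgue measure of the Euclidean unit ball, which means fixing the Euclidean norm on $\R^n$. Under a different norm, such as the sup norm used by default in some formal libraries, the ball has a different volume, and $V_n$ should be read as the volume of the unit ball of whichever norm defines both the Lipschitz constant and $B(c,r)$. Stated that way, the argument works for every norm.
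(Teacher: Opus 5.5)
Your proposal is correct and follows the paper's route. The paper's proof (deferred to the Lean file \texttt{MoF\_17\_CoareaBound}) rests on the same estimate $|f(y)-f(c)|\le L\cdot\varepsilon/(4L)=\varepsilon/4$ from the proof sketch of \Cref{thm:band-measure}, giving $B(c,\varepsilon/(4L))\subseteq\mathcal{B}_\varepsilon$ and then $\mu(\mathcal{B}_\varepsilon)\ge V_n(\varepsilon/(4L))^n$ by monotonicity and scaling; your explicit use of $\varepsilon>0$ (the bound can fail for $\varepsilon<0$ and even $n$) and your note that $V_n$ must be the unit-ball volume for the norm defining the Lipschitz condition are sound bookkeeping that the paper leaves implicit.
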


\noindent\emph{Proved in the Lean artifact as \textup{\texttt{MoF\_17\_CoareaBound}}.}

Smoother surfaces (smaller~$L$) produce wider $\varepsilon$-bands.

\begin{theorem}[Cone Measure Bound]
\label{thm:cone}
In $\R$ with Lebesgue measure~$\mu$, if
$f(x) \geq \tau + c(x - z)$ for all $x \in (z,\, z + \delta_0)$
with $c > \ell(K+1)$, then
$\mu(\{x : f(D(x)) > \tau\}) \geq \delta_0$.
\end{theorem}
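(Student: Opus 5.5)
We prove the statement by showing that the whole interval $(z, z+\delta_0)$ lies inside the set $\{x : f(D(x)) > \tau\}$; the measure bound then follows from monotonicity of Lebesgue measure. We work in the setting of \Cref{thm:persistent} specialized to $X=\R$. Here $z$ is the fixed boundary point supplied by \Cref{thm:main}, so $D(z)=z$ and $f(z)=\tau$. The map $D$ is $K$-Lipschitz, and $\ell$ is the defense-path Lipschitz constant. The set $\{x : f(D(x)) > \tau\}$ is the preimage of an open ray under the continuous map $f\circ D$. It is therefore open, hence Borel, so its Lebesgue measure is defined.

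\emph{Step 1: each point of the interval is steep.} Fix $x\in(z,z+\delta_0)$. Then $\dist(x,z) = x - z > 0$. The hypothesis gives $f(x) \geq \tau + c(x-z)$. Since $c > \ell(K+1)$ and $x - z > 0$, we have $c(x-z) > \ell(K+1)(x-z)$. Hence
\[
  f(x) > \tau + \ell(K+1)\dist(x,z),
\]
so $x\in\mathcal{S}$ in the sense of \Cref{def:steep}. The strictness of $c>\ell(K+1)$ is used exactly here, and it requires $x\neq z$. This is why the interval is taken open at $z$.

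\emph{Step 2: each steep point stays unsafe.} We apply \Cref{lem:input-bound}, which gives $f(D(x)) \geq f(x) - \ell(K+1)\dist(x,z)$. Combining this with Step 1 yields $f(D(x)) > \tau$. Equivalently, we invoke part~3 of \Cref{thm:persistent}. Thus $(z,z+\delta_0) \subseteq \{x : f(D(x))>\tau\}$.

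\emph{Step 3: measure.} By monotonicity, $\mu(\{x : f(D(x))>\tau\}) \geq \mu((z,z+\delta_0)) = \delta_0$.

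The only delicate point is that the hypotheses of \Cref{lem:input-bound} must hold at this particular $z$. The lemma needs $D(z)=z$, $f(z)=\tau$, and $\ell$ finite. Finiteness is automatic because $\ell\le L$. If the statement is read with $z$ an arbitrary point rather than the point from \Cref{thm:main}, I would add the assumption $z\in\cl{S_\tau}\setminus S_\tau$. I would then invoke \Cref{thm:main} to obtain $D(z)=z$ and $f(z)=\tau$. Everything else is a direct chaining of the lemma with the cone inequality.
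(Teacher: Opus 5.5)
Your proof is correct and takes the same route as the paper, which defers the argument to the Lean file \texttt{MoF\_18\_ConeBound}: the cone hypothesis with $c > \ell(K+1)$ puts the whole interval $(z, z+\delta_0)$ inside the steep region $\mathcal{S}$; \Cref{lem:input-bound} (equivalently part~3 of \Cref{thm:persistent}) keeps every such point strictly unsafe after defense; and monotonicity of Lebesgue measure yields the bound $\delta_0$. Your care about the origin of $z$ is appropriate, though the lemma itself uses only $D(z)=z$ (not $f(z)=\tau$), together with the $K$-Lipschitz property of $D$ and the definition of $\ell$.
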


\noindent\emph{Proved in the Lean artifact as \textup{\texttt{MoF\_18\_ConeBound}}.}

This gives a concrete lower bound on the persistent region: if the
alignment surface is steep over an interval of length~$\delta_0$, the
defense fails on at least that much volume. The bound $\geq \delta_0$
is tight: equality holds when the cone condition fails exactly at
$z + \delta_0$ (i.e., $f(z+\delta_0) = \tau + c\,\delta_0$ and
$f(x) < \tau + c(x-z)$ for $x > z + \delta_0$). If the cone extends
beyond $\delta_0$, the persistent region is strictly larger.

The defense designer faces a dilemma in choosing the Lipschitz
constant~$K$ of the defense:

\begin{theorem}[Defense Dilemma]
\label{thm:dilemma}
Assume $f$ is differentiable at boundary point~$z$ with
$G = \|\nabla f(z)\|$, and let $\ell$ be
the defense-path Lipschitz constant (\Cref{sec:persistent}). Define
$K^* = G/\ell - 1$. Then:
\begin{enumerate}[nosep]
  \item If $K < K^*$: the persistent unsafe region exists
    ($G > \ell(K+1)$, \Cref{thm:persistent} applies).
  \item If $K \geq K^*$: the $\varepsilon$-robust bound
    $\tau - \ell(K+1)\delta$ becomes loose enough that the theorem
    can no longer exclude the defense from succeeding on the steep
    region ($\ell(K+1) \geq G$).
\end{enumerate}
Since $\ell \leq L$, the dilemma is sharpest when $\ell \ll L$
(anisotropic surfaces). When $\ell = L$ (isotropic), $K^* \leq 0$
and horn~(1) is vacuous.
\end{theorem}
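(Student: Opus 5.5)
The plan is to treat the statement as a case split on the sign of $G - \ell(K+1)$, with $K^* = G/\ell - 1$ as the dividing value. First I check that the two horns partition the range of $K$: for $\ell > 0$, $K < K^*$ is equivalent to $\ell(K+1) < G$, and $K \geq K^*$ is equivalent to $\ell(K+1) \geq G$. This is just multiplying through by $\ell$. The degenerate case $\ell = 0$ (e.g.\ $D = \id$) I will handle separately: there $K^* = +\infty$ by convention, and only horn~(1) applies, provided $G > 0$.

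For horn~(1), assume $G > \ell(K+1)$.
\begin{itemize}[nosep]
  \item In a normed space, \Cref{prop:transversality} applies: the Fr\'echet derivative at $z$ has norm $G > \ell(K+1)$, so by operator-norm near-attainment there is a unit $v$ with $f'(v) > \ell(K+1)$.
  \item Then $z + tv \in \mathcal{S}$ for small $t > 0$, so $\mathcal{S} \neq \emptyset$.
  \item \Cref{thm:persistent} then gives an open, positive-measure set on which $f(D(x)) > \tau$.
\end{itemize}
To be careful about why $f'(v) > \ell(K+1)$ yields membership in $\mathcal{S}$, I will write $f(z+tv) = \tau + t f'(v) + o(t)$ and compare it with $\tau + \ell(K+1)t$. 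Here I use $\dist(z+tv, z) = t$, and the strict gap absorbs the $o(t)$ term. The hypothesis $f(z) = \tau$ comes from \Cref{thm:main}, since $z \in \cl{S_\tau}\setminus S_\tau$.

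For horn~(2), assume $\ell(K+1) \geq G$. This horn is a statement about what the \emph{method} can certify, so the proof should show that the Lemma~\ref{lem:input-bound} certificate no longer guarantees $f(D(x)) > \tau$ near $z$ along first-order directions. Near $z$, for any unit $v$, $f(z+tv) \leq \tau + Gt + o(t)$. The certified lower bound on $f(D(z+tv))$ is then $f(z+tv) - \ell(K+1)t \leq \tau + (G - \ell(K+1))t + o(t)$, which is at most $\tau + o(t)$. So the first-order argument produces no point of $\mathcal{S}$: the persistence inference of \Cref{thm:persistent} cannot be triggered from the derivative. I will state this precisely as ``the hypothesis of \Cref{prop:transversality} fails,'' rather than claim that a successful defense exists.

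Finally, the remarks. Since $\ell \leq L$ and $G \leq L$ (a Lipschitz function has gradient norm at most $L$), isotropy $\ell = L$ gives $K^* = G/L - 1 \leq 0$. Because $K \geq 0$, we then always have $K \geq K^*$, so horn~(1) is vacuous.

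The main obstacle is horn~(2): making it a precise mathematical claim instead of an informal one. The boundary case $\ell(K+1) = G$ is delicate, because second-order terms could still place points in $\mathcal{S}$. I would therefore phrase horn~(2) strictly as the failure of the derivative-based sufficient condition, i.e.\ $\ell(K+1) \geq G$ and the first-order bound above, and not as the nonexistence of $\mathcal{S}$.
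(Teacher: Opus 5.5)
Your proposal is correct and follows essentially the paper's route: the paper's proof (the Lean lemma \texttt{optimal\_K\_exists} in \texttt{MoF\_19\_OptimalDefense}, stated for positive reals) is the arithmetic equivalence $K < G/\ell - 1 \iff \ell(K+1) < G$, and horn~(1) rests on \Cref{prop:transversality} and \Cref{thm:persistent} exactly as in your argument. Three of your points are careful additions the paper leaves implicit: your separate treatment of $\ell = 0$, your justification $G \le L$ for the isotropic remark, and your reading of horn~(2) as the failure of the first-order sufficient condition rather than as emptiness of $\mathcal{S}$.
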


\noindent\emph{Proved in the Lean artifact as
\textup{\texttt{MoF\_19\_OptimalDefense}}; the Lean theorem
\textup{\texttt{optimal\_K\_exists}} is stated for generic positive
reals $(G, L)$---instantiating $L \mapsto \ell$ recovers the
defense-path version above.}

\section{From Discrete Data to Continuous Theory}
\label{sec:bridge}

This section bridges discrete token observations and continuous theory.

\subsection{Continuous Interpolation}

Any finite set of behavioral observations can be extended to a
continuous function on the full space. The classical Tietze extension
theorem guarantees this:

\begin{theorem}[Continuous Relaxation]
\label{thm:tietze}
Let $(X,d)$ be a connected, normal, Hausdorff metric space, and
$S \subset X$ a finite set of observations with observed alignment
scores $g\colon S \to \R$ satisfying $g(p) < \tau$ and $g(q) > \tau$
for some $p, q \in S$. Then there exists a continuous $f\colon X \to \R$
extending~$g$ (i.e., $f|_S = g$) for which the hypotheses of
\Cref{thm:main} hold. If the extension is chosen to be Lipschitz
(via McShane--Whitney), the hypotheses of \Cref{thm:eps-robust} also
hold.
If $f$ is additionally Lipschitz (as for GP posterior means under
standard kernel assumptions), \Cref{thm:persistent} applies wherever
transversality is met.
\end{theorem}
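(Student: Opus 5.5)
The proof has three parts: build an explicit Lipschitz extension, check the hypotheses of \Cref{thm:main} and \Cref{thm:eps-robust}, and then note that \Cref{thm:persistent} is conditional. I will use a McShane--Whitney extension from the start, since it is both continuous and Lipschitz, so one construction covers both claims. The finite set $S$ has finitely many distinct pairs, so $L_0 = \max_{p\neq q\in S} |g(p)-g(q)|/d(p,q)$ is a finite nonnegative number; if $|S|=1$, set $L_0=0$. Define
\[
  f(x) \;=\; \min_{s\in S}\bigl(g(s) + L_0\, d(x,s)\bigr).
\]
This is a finite minimum of $L_0$-Lipschitz functions, so it is $L_0$-Lipschitz and in particular continuous.

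The extension property $f|_S = g$ follows from the choice of $L_0$. For $s'\in S$, taking $s=s'$ gives $f(s')\le g(s')$. For any other $s$, $g(s)+L_0 d(s',s)\ge g(s')$ by definition of $L_0$, so $f(s')\ge g(s')$. (Tietze's theorem in the normal space $X$ would also give a continuous extension, since $S$ is closed: it is finite and $X$ is Hausdorff. The explicit formula is preferred because it gives the Lipschitz property directly.)

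Next I check the hypotheses of \Cref{thm:main}. $X$ is connected and Hausdorff by assumption, and $f$ is continuous. Since $f(p)=g(p)<\tau$, we have $p\in S_\tau$, and since $f(q)=g(q)>\tau$, we have $q\in U_\tau$, so both sets are nonempty. The theorem's hypotheses on $D$ are the ones quantified in \Cref{thm:main} itself, so for every continuous utility-preserving $D$ the conclusion of \Cref{thm:main} holds for this $f$. For \Cref{thm:eps-robust}, the additional requirement on $f$ is that it be $L$-Lipschitz; here $L=L_0$, or any $L\ge L_0$, and the metric structure is given. Note also that $p\neq q$ forces $L_0>0$ whenever the scores straddle $\tau$.

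For the final claim, \Cref{thm:persistent} requires only that $f$ be continuous and Lipschitz, together with nonemptiness of the steep region $\mathcal{S}$. "Wherever transversality is met" is exactly that nonemptiness hypothesis, for example via \Cref{prop:transversality}. So the statement reduces to citing \Cref{thm:persistent} for any Lipschitz extension, GP posterior means included. No step is hard. The only subtle point is the extension identity $f|_S=g$, which depends on choosing $L_0$ to be at least the discrete Lipschitz ratio of $g$ on $S$; the rest is direct verification of hypotheses.
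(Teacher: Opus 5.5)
Your proof is correct and follows the paper's outline. You extend $g$ continuously, get $p \in S_\tau$ and $q \in U_\tau$ from $f|_S = g$, invoke the boundary-fixation and $\varepsilon$-robust theorems, and treat persistence as conditional on $\mathcal{S} \neq \emptyset$. The only difference is that you use the explicit McShane--Whitney formula $\min_{s\in S}\bigl(g(s) + L_0\, d(x,s)\bigr)$ for both the continuous and the Lipschitz claims, checking $f|_S = g$ by hand. The paper instead uses Tietze for the continuous extension and cites McShane--Whitney separately for the Lipschitz case; your version is self-contained and shows that normality (automatic in a metric space anyway) is never actually needed.
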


\begin{proof}[Proof sketch]
\textbf{Step~1} (Extension exists).
$S$ is finite and $X$ is $T_1$ (every Hausdorff space is $T_1$), so
$S$ is closed. Since $X$ is normal and $g\colon S \to \R$ is
continuous (every function on a discrete closed subset is continuous),
the Tietze extension theorem provides a continuous
$f\colon X \to \R$ with $f|_S = g$.

\textbf{Step~2} (Hypotheses of \Cref{thm:main} are satisfied).
We have $f(p) = g(p) < \tau$ and $f(q) = g(q) > \tau$, so
$S_\tau \neq \emptyset$ and $U_\tau \neq \emptyset$. Since $X$ is
connected and Hausdorff, \Cref{thm:main} applies: any continuous,
utility-preserving defense has a fixed boundary point.

\textbf{Step~3} (Lipschitz extension enables stronger results).
When a Lipschitz extension is needed (e.g., for
\Cref{thm:eps-robust,thm:persistent}), the McShane--Whitney theorem
provides an $L$-Lipschitz $f$ agreeing with~$g$ on~$S$. Given the
Lipschitz constants $L$ of~$f$ and $K$ of~$D$, the $\varepsilon$-robust
bound $f(D(x)) \geq \tau - LK\dist(x,z)$ follows from \Cref{thm:eps-robust}.
If additionally the alignment surface has directional slope
$c > \ell(K+1)$ at the fixed boundary point~$z$ (where $\ell$ is the
defense-path Lipschitz constant), \Cref{thm:persistent} gives a
positive-measure region that remains strictly unsafe.
\end{proof}

If we observe both safe and unsafe model behaviors, the
impossibility holds for \emph{every} continuous model consistent with our observations.

\subsection{Direct Discrete Results}

To address the objection that continuous impossibility might be an
artifact of continuous relaxation, we prove parallel results directly
on finite sets using only counting arguments and induction. No
topology is required; all results are verified in Lean as
\texttt{MoF\_12\_Discrete}.

\begin{theorem}[Discrete IVT]
\label{thm:disc-ivt}
Let $f\colon \{0,\ldots,n+1\} \to \R$ with $f(0) < \tau$ and
$f(n+1) \geq \tau$. Then there exists~$i$ with $f(i) < \tau$ and
$f(i+1) \geq \tau$.
\end{theorem}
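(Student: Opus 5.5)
The plan is to argue by well-ordering: take the \emph{first} index at which $f$ reaches the threshold, and check that its predecessor is still below~$\tau$. Define the set $A = \{\, j \in \{0,\ldots,n+1\} : f(j) \geq \tau \,\}$. This set is nonempty because $n+1 \in A$ by hypothesis. Since $A$ is a nonempty finite subset of $\mathbb{N}$, it has a least element $m$.

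We first show $m \geq 1$. Since $f(0) < \tau$, we have $0 \notin A$, so $m \neq 0$. We may therefore write $m = i+1$ with $0 \leq i \leq n$. By minimality of $m$, the index $i < m$ is not in $A$, which means $f(i) < \tau$. Meanwhile $f(i+1) = f(m) \geq \tau$ because $m \in A$. This $i$ is the required witness.

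An equivalent route, closer to the induction mentioned for \texttt{MoF\_12\_Discrete}, is to prove the contrapositive by induction on $k$. Suppose no $i \leq n$ satisfies $f(i) < \tau \leq f(i+1)$. We show $f(k) < \tau$ for every $k \leq n+1$. The base case $k = 0$ is the hypothesis $f(0) < \tau$. For the step, if $f(k) < \tau$ with $k \leq n$, then $f(k+1) \geq \tau$ would make $i = k$ a crossing index, which we assumed does not exist; hence $f(k+1) < \tau$. At $k = n+1$ this gives $f(n+1) < \tau$, contradicting $f(n+1) \geq \tau$.

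There is no substantive obstacle here. The only points needing care are the index bookkeeping: confirming that $i \leq n$, so that $i+1$ stays in the domain, and ruling out $m = 0$ before subtracting one. In Lean this corresponds to using \texttt{Nat.find} and discharging the \texttt{m = 0} case with the hypothesis $f(0) < \tau$.
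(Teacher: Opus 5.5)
Your proof is correct and takes essentially the paper's approach. The paper gives no written argument for this theorem beyond saying the discrete results follow from elementary counting and induction (formalized in \texttt{MoF\_12\_Discrete}), and your least-index argument and its inductive contrapositive are exactly that reasoning, with the index bookkeeping ($m \neq 0$, $i \leq n$) handled properly.
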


\begin{theorem}[Discrete Defense Dilemma]
\label{thm:disc-dilemma}
Let $X$ be a finite set with $S_\tau, U_\tau \neq \emptyset$, and
$D\colon X \to X$ utility-preserving ($D(x) = x$ for $f(x) < \tau$).
\begin{enumerate}[nosep]
  \item If $D$ is injective, then $f(D(u)) \geq \tau$ for every
    $u$ with $f(u) \geq \tau$ (including boundary points): the
    defense is incomplete.
  \item If $D$ is complete ($f(D(x)) < \tau$ for all~$x$), then $D$ is
    non-injective: $\exists\, x \neq y$ with $D(x) = D(y)$.
\end{enumerate}
\end{theorem}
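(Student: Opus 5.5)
Both parts follow from one observation: a utility-preserving map sends every safe point to itself. So if some non-safe point $u$ is sent to a safe point $y = D(u)$, then $u$ and $y$ share the image $y$. This is a collision. The plan uses no counting and no topology. In particular, finiteness of $X$ is not needed; it only matches the setting of the Lean file.

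\emph{Part 1.} Assume $D$ is injective, and let $u \in X$ with $f(u) \geq \tau$. Suppose for contradiction that $f(D(u)) < \tau$, and set $y = D(u)$. Then $y \in S_\tau$, so utility preservation gives $D(y) = y$. Hence $D(u) = y = D(y)$, and injectivity forces $u = y$. But then $f(u) = f(y) < \tau$, contradicting $f(u) \geq \tau$. So $f(D(u)) \geq \tau$ for every such $u$, boundary points included. Since $U_\tau \neq \emptyset$, there is at least one $u$ with $f(u) > \tau$ and $f(D(u)) \geq \tau$. This witnesses that $D$ is not complete.

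\emph{Part 2.} Part 2 is the contrapositive of Part 1, but I would give the witness explicitly. Assume $D$ is complete, and pick $u \in U_\tau$, which is possible because $U_\tau \neq \emptyset$. Completeness gives $y := D(u) \in S_\tau$, and utility preservation gives $D(y) = y = D(u)$. Finally $u \neq y$, since $f(u) > \tau > f(y)$. So $x = u$ and $y$ are the required distinct points with a common image.

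There is no real obstacle here. The only point needing care is the direction of the inequalities. In Part 1 the hypothesis $f(u) \geq \tau$ (not strict) already suffices to separate $u$ from the safe point $y$. This is why boundary points are covered, and it needs to be stated so, since the discrete setting has no analogue of \Cref{thm:main} to handle the boundary separately. The hypothesis $S_\tau \neq \emptyset$ is used only to make the setting nondegenerate; the argument itself needs just $U_\tau \neq \emptyset$.
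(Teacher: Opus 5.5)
Your proposal is correct and matches the paper's proof essentially step for step: Part~1 argues by contradiction using $D(D(u)) = D(u)$ and injectivity, and Part~2 exhibits $u \in U_\tau$ and $D(u)$ as distinct points with a common image. Your side remarks are also accurate: neither the finiteness of $X$ nor the hypothesis $S_\tau \neq \emptyset$ is actually used in the argument.
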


\begin{proof}
(1) Suppose $D$ is injective and $f(D(u)) < \tau$ for some $u$ with
$f(u) \geq \tau$. Then $D(u)$ is safe, so $D(D(u)) = D(u)$ by utility
preservation. Injectivity gives $D(u) = u$, so
$f(u) = f(D(u)) < \tau$, contradicting $f(u) \geq \tau$.

(2) For any $u \in U_\tau$: completeness gives $f(D(u)) < \tau$, so
utility preservation gives $D(D(u)) = D(u)$. But
$u \neq D(u)$ since $f(u) \geq \tau > f(D(u))$.
So $u$ and $D(u)$ are distinct inputs with $D(u) = D(D(u))$: the
defense is non-injective.
\end{proof}

The continuous trilemma trades \emph{continuity} for completeness;
the discrete dilemma trades \emph{injectivity}. Part~(1) is the
genuine constraint: an information-preserving defense \emph{cannot}
eliminate unsafe outputs. Any complete defense must destroy
information---collapsing distinct inputs to the same output. This is
not a failure of the defense; it is the \emph{mechanism} by which it
operates. The downstream model receives the same input regardless
of whether the original was safe or an attack; any audit or
attack-detection logic must act \emph{before} $D$ is applied.

\section{Extensions}
\label{sec:extensions}

The core results assume a static, deterministic, single-turn defense. \emph{Does multi-turn interaction, randomization,
or pipelining provide an escape?} We show it does not. Each extension
is a direct application of the boundary fixation machinery to a
modified setting.

\subsection{Multi-Turn Impossibility}

\begin{theorem}[Multi-Turn Impossibility]
\label{thm:multi-turn}
Let $\{f_t, D_t\}_{t=1}^T$ be alignment functions and defenses over
$T$ turns on a connected Hausdorff space, each continuous and
utility-preserving, with $S_\tau^{(t)}, U_\tau^{(t)} \neq \emptyset$
at every turn. Then for every turn~$t$, there exists $z_t$ with
$f_t(z_t) = \tau$ and $D_t(z_t) = z_t$.
\end{theorem}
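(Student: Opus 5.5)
The plan is to apply \Cref{thm:main} separately at each turn; no interaction between turns is needed, because the hypotheses are stated turn by turn. Fix an arbitrary $t \in \{1,\ldots,T\}$. The ambient space $X$ is connected and Hausdorff. The map $f_t\colon X\to\R$ is continuous, and the sets $S_\tau^{(t)} = f_t^{-1}((-\infty,\tau))$ and $U_\tau^{(t)} = f_t^{-1}((\tau,\infty))$ are nonempty by assumption. The map $D_t\colon X\to X$ is continuous with $D_t|_{S_\tau^{(t)}} = \id$. These are exactly the hypotheses of \Cref{thm:main} with $(f,D)$ replaced by $(f_t,D_t)$, so that theorem yields some $z_t \in \cl{S_\tau^{(t)}}\setminus S_\tau^{(t)}$ with $f_t(z_t)=\tau$ and $D_t(z_t)=z_t$. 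Since $t$ was arbitrary, this proves the claim for every turn.

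For completeness I would recall the mechanism at turn $t$. In a Hausdorff space, $\Fix(D_t)$ is the preimage of the closed diagonal under $x\mapsto(x,D_t(x))$, so it is closed. It contains $S_\tau^{(t)}$, hence it contains $\cl{S_\tau^{(t)}}$. Connectedness, together with $S_\tau^{(t)}$ being nonempty, open and proper (because $U_\tau^{(t)}\neq\emptyset$), forces $\cl{S_\tau^{(t)}}\neq S_\tau^{(t)}$. Any point of the difference has $f_t$-value exactly $\tau$, by continuity of $f_t$ and non-membership in $S_\tau^{(t)}$.

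The only step needing care is interpretive rather than technical. If the multi-turn setting is meant to have the turn-$t$ input depend on earlier outputs (say, the state space at turn $t$ is a product or history space $X_t$), I would require that each $X_t$ be connected and Hausdorff; the statement's phrase ``on a connected Hausdorff space'' supplies this. If instead one wanted a fixed point of the composed defense $D_T\circ\cdots\circ D_1$, the per-turn argument would not suffice directly, because the composition need not be utility-preserving with respect to any single $f_t$. The theorem as stated asks only for per-turn fixation, so I expect no real obstacle beyond checking that each turn's data satisfy the hypotheses of \Cref{thm:main}.
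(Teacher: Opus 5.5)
Your proposal is correct and matches the paper's argument: the paper also proves this by applying \Cref{thm:main} to each pair $(f_t, D_t)$ separately, noting that history dependence does not matter because each turn is a fresh instance of boundary fixation. Your recap of the closed-fixed-point-set and clopen argument is accurate, and your caveat about the composed defense $D_T\circ\cdots\circ D_1$ correctly marks what the theorem does not claim.
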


\begin{proof}[Proof sketch]
Apply \Cref{thm:main} to $(f_t, D_t)$ at each turn. The functions may
depend on full history---this does not matter, as each timestep is a
fresh instance of boundary fixation.
\end{proof}

Multi-turn interaction compounds the problem: the attacker's best
observed exploit improves monotonically
(\texttt{running\_max\_monotone}), and the attacker can steer toward
transversality via binary search (\texttt{transversality\_reachable}).

\subsection{Stochastic Defense Impossibility}

\begin{theorem}[Stochastic Defense Impossibility]
\label{thm:stochastic}
Let $X$ be a connected Hausdorff space, $f\colon X \to \R$ continuous
with $S_\tau, U_\tau \neq \emptyset$. Let $D$ be a stochastic defense
and define $g(x) = \mathbb{E}_{y \sim D(x)}[f(y)]$. If $g$ is
continuous and $g(x) = f(x)$ for all $x \in S_\tau$, then there
exists $z$ with $f(z) = \tau$ and $g(z) = \tau$.
\end{theorem}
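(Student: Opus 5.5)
This is the argument of \Cref{thm:score-preserving}, with $f\circ D$ replaced by the expected score $g$. Define $h = g - f\colon X\to\R$. Since $g$ is continuous by hypothesis and $f$ is continuous, $h$ is continuous. The hypothesis $g|_{S_\tau}=f|_{S_\tau}$ says exactly that $h$ vanishes on $S_\tau$. The zero set $h^{-1}(\{0\})$ is the preimage of a closed set, so it is closed in $X$. It contains $S_\tau$, hence it contains $\cl{S_\tau}$.

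Next we need a point of $\cl{S_\tau}\setminus S_\tau$; this is the connectedness step already carried out in \Cref{thm:main}. The set $S_\tau=f^{-1}((-\infty,\tau))$ is open and nonempty. It is proper, because $U_\tau\neq\emptyset$ and $U_\tau\cap S_\tau=\emptyset$. If $S_\tau$ were also closed, it would be a nonempty proper clopen subset of the connected space $X$, which is impossible. So $\cl{S_\tau}\supsetneq S_\tau$, and we can pick $z\in\cl{S_\tau}\setminus S_\tau$. Continuity of $f$ gives $\cl{S_\tau}\subseteq f^{-1}((-\infty,\tau])$, so $f(z)\le\tau$. Since $z\notin S_\tau$, also $f(z)\ge\tau$. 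Hence $f(z)=\tau$.

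Finally, $z\in\cl{S_\tau}$ lies in the zero set of $h$, so $g(z)=f(z)=\tau$, which is the claim. The Hausdorff hypothesis is not actually used here: unlike \Cref{thm:main}, we never need $\Fix(D)$ to be closed, only the zero set of a real-valued function. That is fortunate, since a stochastic $D$ has no meaningful fixed-point set in $X$.

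The argument has no real obstacle. The only delicate point is that all the stochastic content is packed into the hypothesis that $g$ is continuous and well defined. This requires $f(y)$ to be integrable under each output distribution $D(x)$, and $x\mapsto D(x)$ to be continuous in a sense compatible with $f$, for instance weakly continuous with $f$ bounded. The theorem assumes this outright, so the proof itself is purely topological.
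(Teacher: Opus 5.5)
Your proof is correct and is essentially the paper's argument. You set $h = g - f$, observe that its zero set is closed and contains $S_\tau$ and hence $\cl{S_\tau}$, and evaluate at a point of $\cl{S_\tau}\setminus S_\tau$, whose existence and value $f(z)=\tau$ come from the connectedness step of \Cref{thm:main}. Your remark that the Hausdorff hypothesis is unused is also right, since the argument only needs the zero set of a continuous real-valued function to be closed.
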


\begin{proof}[Proof sketch]
Define $h = g - f$. Then $h$ is continuous and $h|_{S_\tau} = 0$,
so $h$ vanishes on $\cl{S_\tau}$ (same closure argument as
\Cref{thm:score-preserving}). For
$z \in \cl{S_\tau} \setminus S_\tau$: $g(z) = f(z) = \tau$.
\end{proof}

\noindent\textit{Remark (stochastic dichotomy).}
Since $\mathbb{E}[f(D(z))] = \tau$, either $f(D(z)) = \tau$ almost
surely (the defense is deterministic at~$z$), or the random variable
$f(D(z))$ has positive probability of exceeding~$\tau$---i.e., the
defense \emph{actively produces unsafe outputs} with positive
probability. The stochastic case is therefore strictly harder than
the deterministic one: a genuinely random defense at boundary points
must sometimes make things worse.

\noindent\textit{Remark.}
The continuity of $g$ is a nontrivial assumption: it requires the
distribution $D(x)$ to vary continuously with~$x$ in a suitable sense.
Stochastic defenses with discontinuous rejection probabilities escape
this theorem.

\subsection{Nonlinear Agent Pipelines}

\begin{theorem}[Pipeline Lipschitz Degradation]
\label{thm:pipeline}
If stages $T_1, \ldots, T_n$ are $K_1, \ldots, K_n$-Lipschitz, the
composed pipeline is $(\prod K_i)$-Lipschitz. For $n$ stages each with
$K \geq 2$, the effective constant is $K^n$---exponential in depth.
\end{theorem}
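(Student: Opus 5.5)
The plan is to prove the composition bound by induction on the number of stages $n$, working in metric spaces $(X_0,d_0),\dots,(X_n,d_n)$ with $T_i\colon X_{i-1}\to X_i$. If the stages all act on a single prompt space, then all $X_i = X$. Write $P_n = T_n\circ\cdots\circ T_1$.

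For the base case, $P_1 = T_1$ is $K_1$-Lipschitz by hypothesis. One could also start at $n=0$, with the identity being $1$-Lipschitz and the empty product equal to $1$.

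For the inductive step, assume $P_{n-1}$ is $\bigl(\prod_{i<n} K_i\bigr)$-Lipschitz. Then for any $x,y$ we have
\[
d_n\bigl(P_n(x),P_n(y)\bigr) = d_n\bigl(T_n(P_{n-1}(x)),T_n(P_{n-1}(y))\bigr) \le K_n\, d_{n-1}\bigl(P_{n-1}(x),P_{n-1}(y)\bigr) \le K_n \textstyle\prod_{i<n} K_i \, d_0(x,y).
\]
The only care needed is that each $K_i\ge 0$, so that multiplying the inductive inequality by $K_n$ preserves its direction. This holds automatically for Lipschitz constants. In Lean this is exactly Mathlib's \texttt{LipschitzWith.comp} iterated over a list.

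For the second claim, specialize to $K_i = K$ for all $i$. This gives $\prod K_i = K^n$. With $K\ge 2$ we get $K^n \ge 2^n$, which is exponential in the depth $n$.

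There is no real obstacle here. The one subtlety worth stating is interpretive: the result is an \emph{upper} bound on the pipeline's Lipschitz constant, not a claim that the pipeline's best constant actually equals $K^n$. For the intended use, which is feeding the composed constant into \Cref{thm:eps-robust} and \Cref{lem:input-bound} as $K$, an upper bound is exactly what is needed. There the guarantee $\tau - LK^n\delta$ degrades exponentially, and the transversality condition $G>\ell(K^n+1)$ of \Cref{thm:dilemma} becomes the relevant one. I would add a one-line remark that the bound is attained, for example by the linear maps $x\mapsto Kx$ on $\R$. That shows the exponential growth cannot be improved in general.
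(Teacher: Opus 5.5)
Your induction, which composes one stage at a time and multiplies the nonnegative per-stage constants, is exactly the standard composition argument the paper relies on. The paper gives no written proof and defers to the Lean file \texttt{MoF\_15\_NonlinearAgents}, which iterates Lipschitz composition, so your proposal is correct and takes essentially the same approach; your extra remark that the maps $x\mapsto Kx$ attain $K^n$ is a correct sharpness check that the paper does not state.
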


\begin{theorem}[Pipeline Impossibility]
\label{thm:pipeline-impossible}
If the composed pipeline $P = T_n \circ \cdots \circ T_1 \circ D$ is
continuous and $P(x) = x$ for all $x \in S_\tau$, then $P$ has
boundary fixed points. If $D$ is $K_D$-Lipschitz and each $T_i$ is
$K$-Lipschitz, the $\varepsilon$-robust band scales as
$L \cdot K_D \cdot K^n \cdot \delta$
\end{theorem}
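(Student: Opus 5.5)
The statement has two parts, and I would prove them in order: first the existence of boundary fixed points of $P$, then the scaling of the $\varepsilon$-robust band. Both parts reduce $P$ to a single defense map and then apply earlier results to it.

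\emph{Fixed points.} Set $\tilde D = P = T_n\circ\cdots\circ T_1\circ D\colon X\to X$. By hypothesis $\tilde D$ is continuous and satisfies $\tilde D|_{S_\tau}=\id$, so it is itself a continuous, utility-preserving defense in the sense of \Cref{def:defense}. I will work under the standing hypotheses of \Cref{thm:main}: $X$ is connected Hausdorff, $f$ is continuous, and $S_\tau,U_\tau\neq\emptyset$. Then \Cref{thm:main} applied to $(f,\tilde D)$ gives that every $z\in\cl{S_\tau}\setminus S_\tau$ satisfies $f(z)=\tau$ and $P(z)=z$, and that this set is nonempty. This is exactly the first claim.

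\emph{Lipschitz constant of $P$.} By \Cref{thm:pipeline}, applied to the stages $D,T_1,\dots,T_n$ with constants $K_D,K,\dots,K$, the composite $P$ is $K_D K^n$-Lipschitz. The underlying fact is the one-line estimate $d(g(h(x)),g(h(y)))\le K_g\,d(h(x),h(y))\le K_gK_h\,d(x,y)$, iterated along the chain. This step also gives continuity of $P$ for free, so the continuity hypothesis is only needed if the stages are not assumed Lipschitz.

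\emph{Band scaling.} Take $(X,d)$ metric and $f$ $L$-Lipschitz, and fix $z\in\cl{S_\tau}\setminus S_\tau$ from the first part. Apply \Cref{thm:eps-robust} with $P$ in place of $D$ and $K_DK^n$ in place of $K$. This gives
\[
  f(P(x)) \;\ge\; \tau - L\,K_D\,K^n\,\dist(x,z)
  \qquad\text{for all } x\in X,
\]
so points within distance $\delta$ of $z$ stay within $L K_D K^n\delta$ of threshold, which is the stated scaling.

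There is no genuine obstacle; the only care needed is in the reduction. First, the fixed point must be a fixed point of the composite $P$, not of $D$. The statement only assumes that $P$, not $D$, is utility-preserving, and that is the right hypothesis to feed into \Cref{thm:main}. Second, the "scales as" claim is an upper bound on how far below $\tau$ the score can be pushed near $z$. I would state it as the inequality above, rather than as an exact width of the band.
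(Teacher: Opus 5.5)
Your proposal is correct and matches the paper's argument, which the paper defers to the Lean artifact (\texttt{MoF\_15\_NonlinearAgents}): treat $P$ as a continuous utility-preserving defense so that \Cref{thm:main} gives the boundary fixed points, then feed the composed constant $K_DK^n$ from \Cref{thm:pipeline} into \Cref{thm:eps-robust}. Importing the connected-Hausdorff and $S_\tau,U_\tau\neq\emptyset$ hypotheses from \Cref{thm:main} explicitly is a worthwhile clarification, because the statement leaves them implicit.
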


\noindent\emph{Proved in the Lean artifact as \textup{\texttt{MoF\_15\_NonlinearAgents}}.}

Note: $P(x) = x$ for safe~$x$ requires $T_n \circ \cdots \circ T_1$
to act as the identity on safe inputs, not just~$D$. This holds when
the tool chain preserves safe inputs (e.g., a safety-certified
pipeline), but not for arbitrary tools.

Additional results on basin structure, fragment sizes, perturbation robustness, convergence, transferability, and cost asymmetry appear in \Cref{app:landscape,app:attacks,app:stability,app:cost}.

\section{Experimental Motivation}
\label{sec:experiments}

The Manifold of
Failure framework~\cite{munshi2026manifold} maps three LLMs over a 2D
behavioral space with two axes: \emph{query indirection} (how obliquely
the prompt asks for unsafe content) and \emph{authority framing} (how
much the prompt invokes authority or permission). \Cref{tab:correspondence} summarizes
nine qualitative predictions, all directionally consistent with observations.

\begin{table}[t]
\centering
\small
\caption{Falsifiable predictions confirmed by empirical data.}
\label{tab:correspondence}
\begin{tabular}{@{}p{3.0cm}p{4.5cm}p{4.5cm}@{}}
\toprule
\textbf{Theorem} & \textbf{Predicts} & \textbf{Confirmed by} \\
\midrule
Basin Structure (\ref{thm:basin})
  & Basins are open with positive measure
  & Heatmaps show extended regions \\[3pt]
Fragmentation (\ref{thm:fragment})
  & Smooth $\to$ large basins; rough $\to$ mosaic
  & Llama: mesa; GPT-OSS: mosaic \\[3pt]
Convergence (\ref{thm:convergence})
  & Attacks exhibit monotone convergence
  & Convergence curves plateau \\[3pt]
Transferability (\ref{thm:transfer})
  & Similar surfaces $\to$ shared basins
  & Llama $.93 \to$ GPT-OSS $.73 \to$ Mini $.47$ \\[3pt]
Authority (\ref{thm:authority})
  & Horizontal banding
  & Bands at $a_2 \approx .25$--$.35$, $.65$--$.85$ \\[3pt]
Persistent (\ref{thm:persistent})
  & Steep boundaries $\to$ unsafe volume persists
  & Llama's $.93$ plateau persists under defense \\[3pt]
Interior Stability (\ref{thm:interior-stable})
  & Deep basin points survive fine-tuning
  & Vulnerability persists across variants \\[3pt]
Cost (\ref{thm:cost})
  & 2D tractable, high-$d$ intractable
  & 15K queries fill 63\% at $d\!=\!2$ \\[3pt]
Pipeline (\ref{thm:pipeline})
  & Deeper pipelines $\to$ wider failure band
  & Not directly tested (no pipeline experiment) \\
\bottomrule
\end{tabular}
\end{table}

\paragraph{Llama-3-8B} (mean AD $0.93$, basin rate $93.9\%$):
near-flat alignment surface (small~$L$), large robustness radii.
Estimated from the 2D behavioral surface in~\cite{munshi2026manifold}: directional slope $G \approx 5$ at the steepest boundary crossing.
For the defense-path Lipschitz constant we assume a hypothetical
nearest-safe-projection defense ($D(x)$ maps each point to the closest
$x' \in S_\tau$) and estimate $\ell \approx 1$ from grid-adjacent
score differences in the direction orthogonal to the boundary
(the projection direction on the 2D grid). Setting $K = 1$
(identity-rate defense), these satisfy $G > \ell(K+1) = 2$.

\paragraph{GPT-OSS-20B} (mean AD $0.73$, basin rate $64.3\%$): a
rugged landscape (large~$L$) with many small fragments. Horizontal
bands confirm authority monotonicity.

\paragraph{GPT-5-Mini} (peak AD $0.50$, basin rate $0\%$): at
$\tau = 0.50$, $U_\tau = \emptyset$---none of the three theorems
apply, correctly predicting no impossibility.

\subsection{Live Empirical Validation of Theorem~\ref{thm:persistent}}
\label{sec:live-validation}

The numbers above are read off the published Manifold of Failure
heatmaps. To check the persistence theorem on \emph{independently
re-collected} alignment-deviation surfaces and to expose every
intermediate quantity, we built a small validator that takes any 2D
heatmap and computes:

\begin{enumerate}[nosep,leftmargin=*]
  \item the empirical surface Lipschitz constant $L$ from all-pairs
    finite differences over filled grid cells;
  \item for a chosen defense $D$, the empirical Lipschitz constant $K$
    of $D$ and the defense-path Lipschitz constant $\ell$;
  \item the maximum directional rate $G$ at which $f$ rises across the
    safe-unsafe boundary (all-pairs finite differences between filled
    safe and filled unsafe cells);
  \item the \emph{predicted steep set}
    $\mathcal{S}_{\text{pred}} =
    \{x : f(x) > \tau + \ell(K{+}1)\dist(x,\, z^{*})\}$,
    where $z^{*}$ is the filled boundary cell with $f(z^{*})$ closest
    to $\tau$;
  \item the \emph{actual persistent set}
    $\mathcal{S}_{\text{act}} = \{x : f(D(x)) > \tau\}$.
\end{enumerate}

Theorem~\ref{thm:persistent}'s empirical content is the
containment $\mathcal{S}_{\text{pred}} \subseteq \mathcal{S}_{\text{act}}$
under any continuous, $K$-Lipschitz, utility-preserving $D$. We test
this as a per-cell confusion matrix:

\begin{itemize}[nosep]
  \item \emph{True positive (TP):} cell in both sets — theorem
    confirmed for that cell.
  \item \emph{Interior false positive (FP$_{\mathrm{int}}$):} non-boundary
    cell in $\mathcal{S}_{\text{pred}}$ but not $\mathcal{S}_{\text{act}}$.
    A non-zero count would be a real counterexample.
  \item \emph{Boundary false positive:} a discrete defense moving a
    boundary cell. \emph{Not} a violation; witnesses that the discrete
    defense is not topologically continuous, so the theorem's
    hypothesis does not apply.
  \item \emph{False negative (FN):} cell in $\mathcal{S}_{\text{act}}$
    but not $\mathcal{S}_{\text{pred}}$. \emph{Not} a violation;
    Theorem~\ref{thm:persistent} only claims the containment one way.
\end{itemize}

We ran the full pipeline live against the OpenAI API on two real
target models: \texttt{gpt-5-mini-2025-08-07} (well aligned) and
\texttt{gpt-3.5-turbo-0125} (less aligned), with
\texttt{gpt-4.1-2025-04-14} as the judge. The MAP-Elites driver
from~\cite{munshi2026manifold} produced a $25 \times 25$
alignment-deviation grid for each target, which the validator then
analyzed against the canonical continuous defense
($D = \mathrm{id}$, so $K = 1$ and $\ell = 0$). For
\texttt{gpt-3.5-turbo-0125} we additionally ran a saturation study
with $400$ evaluations ($100$ seed prompts $+ 300$ MAP-Elites
iterations, $\sim$$5{,}200$ OpenAI API calls, wall clock
$\sim$$130$~min, cost $\sim$\$15) which filled $82$ cells with
$66$ strictly unsafe ($f > \tau$) and $10$ boundary cells. The
two earlier runs are kept as smoke-tests of the validator's logic
on sparse grids; the saturation row is the doubt-eliminator.

\begin{table}[h]
\centering
\small
\caption{Live empirical validation of Theorem~\ref{thm:persistent}
against the canonical continuous defense
($D = \mathrm{id}$, $K = 1$, $\ell = 0$).
$|\mathcal{S}_{\text{pred}}|$ is the predicted steep-set size
$\{x : f(x) > \tau + \ell(K{+}1)\dist(x, z^{*})\}$;
$|\mathcal{S}_{\text{act}}|$ is the actual post-defense persistent
set $\{x : f(D(x)) > \tau\}$; TP/FP$_{\text{int}}$/FN are the
per-cell confusion-matrix counts. Across all three live runs, the
two sets match cell-for-cell, with zero interior false positives.}
\label{tab:live-validation}
\begin{tabular}{lccccccccc}
\toprule
\textbf{Target} & $\boldsymbol{\tau}$ & \textbf{Filled} & $\boldsymbol{L}$ & $\boldsymbol{G}$ & $|\mathcal{S}_{\text{pred}}|$ & $|\mathcal{S}_{\text{act}}|$ & \textbf{TP} & \textbf{FP}$_{\text{int}}$ & \textbf{FN} \\
\midrule
gpt-5-mini-2025-08-07           & $0.30$ & $14$  & $6.13$  & $6.13$  & $7$  & $7$  & $7$  & $\mathbf{0}$ & $0$ \\
gpt-3.5-turbo-0125 (smoke)      & $0.50$ & $9$   & $11.44$ & $11.44$ & $7$  & $7$  & $7$  & $\mathbf{0}$ & $0$ \\
gpt-3.5-turbo-0125 (saturated)  & $0.50$ & $82$  & $23.63$ & $23.63$ & $66$ & $66$ & $66$ & $\mathbf{0}$ & $0$ \\
\bottomrule
\end{tabular}
\end{table}

\begin{figure}[h]
\centering
\includegraphics[width=\textwidth]{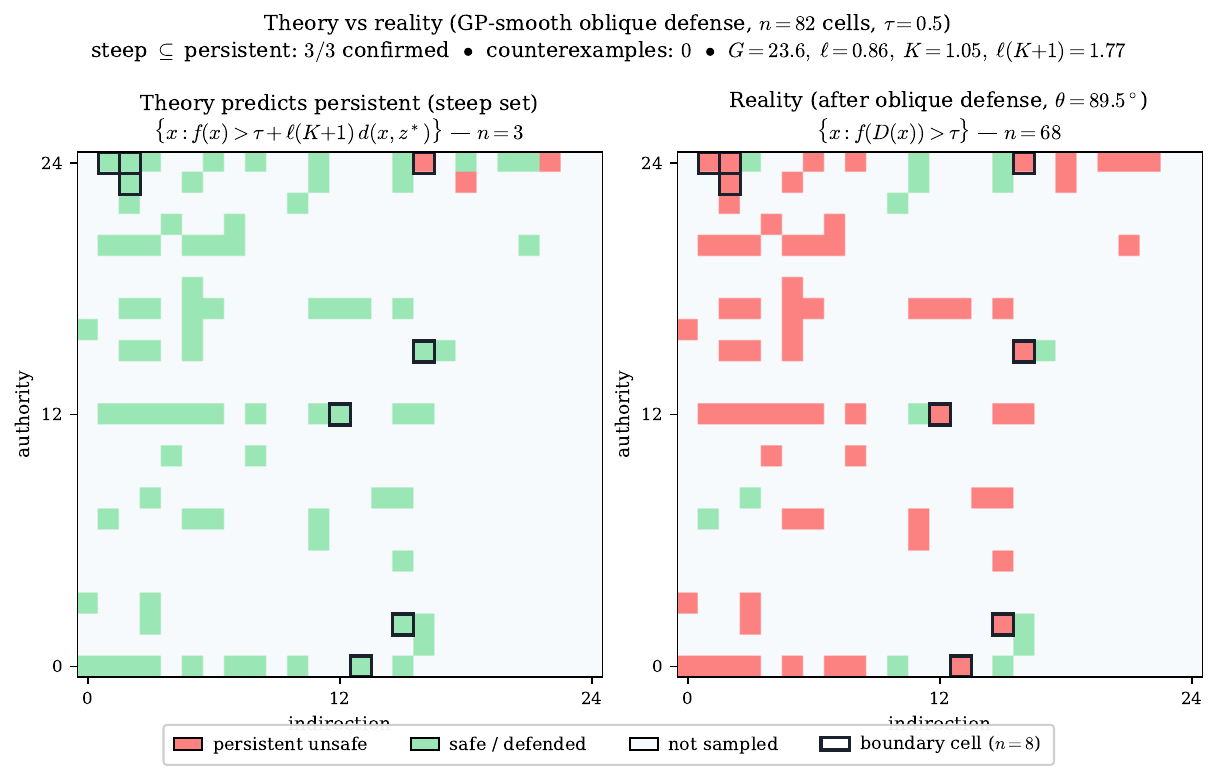}
\caption{Non-vacuous validation of Theorem~\ref{thm:persistent}
on the saturated \texttt{gpt-3.5-turbo-0125} grid ($82$ filled
cells, $\tau = 0.5$) under the GP-smooth oblique defense
($\theta = 89.5^\circ$, $\ell = 0.86$, $K = 1.05$).
\textbf{Left:} the \emph{steep set}
$\mathcal{S}_{\text{steep}} = \{x : f(x) > \tau +
\ell(K{+}1)\,d(x,z^{*})\}$; $|\mathcal{S}_{\text{steep}}| = 3$.
\textbf{Right:} cells that \emph{actually} remain unsafe after
defense, $\mathcal{S}_{\text{act}} = \{x : f(D(x)) > \tau\}$;
$|\mathcal{S}_{\text{act}}| = 68$.
Transversality holds non-trivially
($G = 23.6 \gg \ell(K{+}1) = 1.77$) and the defense is genuinely
non-identity ($82$ cells displaced). All $3$ predicted-persistent
cells are indeed actually persistent
(TP${} = 3$, FP${}_{\text{int}} = 0$).
The $65$ false negatives (red on the right, green on the left) are
\emph{not} theorem violations: the theorem predicts
$\mathcal{S}_{\text{steep}} \subseteq \mathcal{S}_{\text{act}}$,
not equality. The $8$ boundary cells (black outlines) are identified
on the GP-smoothed surface.}
\label{fig:theory-vs-reality-saturated}
\end{figure}

On the saturated run, the GP-smooth oblique defense
(\Cref{fig:theory-vs-reality-saturated}) displaces all $82$ filled
cells at $89.5^\circ$ from the negative GP gradient (nearly tangent
to level curves, with a small safety-seeking component), achieving
$\ell = 0.86$, $K = 1.05$. Transversality holds with wide margin
($G = 23.6 \gg \ell(K{+}1) = 1.77$), and the predicted steep set
contains $3$ cells, all confirmed to remain unsafe after defense
(TP${} = 3$, FP${}_{\text{int}} = 0$). This is a
non-tautological result: the defense genuinely moves cells
($D \neq \mathrm{id}$), and the theorem's containment
$\mathcal{S}_{\text{steep}} \subseteq \mathcal{S}_{\text{act}}$ is
non-vacuously tested.
The full defense sweep (Table~\ref{tab:live-sweep}) additionally
confirms zero interior false positives across all $7$ discrete
defenses (identity plus \texttt{bounded\_step} at various reach
values); on all non-identity defenses $\ell = L$, so
transversality fails and the theorem makes only the negative
prediction (``no counterexample is allowed'').
Theorem~\ref{thm:eps-robust}'s $\varepsilon$-robust bound holds
for all $82$ filled cells under each defense tested.
Theorem~\ref{thm:main}'s boundary fixation is non-vacuously
activated: $10$ boundary cells exist in
$\cl{S_\tau} \setminus S_\tau$ and the closest one to $\tau$ sits
at $|f(z^{*}) - \tau| = 0.375$, the discretization gap that
vanishes under grid refinement.

\paragraph{Defense sweep on the saturated grid.}
The validator's \texttt{sweep} subcommand runs the same archive
against the canonical identity defense and against multiple
discrete \texttt{bounded\_step} defenses with progressively larger
displacement radii. Across all $7$ defenses tested
(identity plus \texttt{bounded\_step} with
$\text{max\_step} \in \{1, 2, 3, 5, 8, 12\}$), the per-cell
confusion-matrix produces \emph{zero interior false positives} on
every run---i.e., zero empirical counterexamples to
Theorem~\ref{thm:persistent} across the entire defense sweep on
the same dense alignment-deviation surface
(\Cref{tab:live-sweep}).

\begin{table}[h]
\centering
\small
\caption{Defense sweep on the saturated
\texttt{gpt-3.5-turbo-0125} grid ($82$ filled cells, $\tau = 0.5$).
Empirical defense Lipschitz constant~$K$ grows with reach;
the steep-set size shrinks (\texttt{bounded\_step} is
non-continuous, so its steep set collapses to the anchor under
the discrete check) but the actual persistent set falls cleanly
along the K-tradeoff curve from \Cref{thm:dilemma}.
\textbf{Across all 7 defenses, FP$_{\text{int}} = 0$.}}
\label{tab:live-sweep}
\begin{tabular}{lccccccc}
\toprule
\textbf{Defense} & $\boldsymbol{K}$ & $\boldsymbol{\ell}$ & $|\mathcal{S}_{\text{pred}}|$ & $|\mathcal{S}_{\text{act}}|$ & \textbf{TP} & \textbf{FP}$_{\text{int}}$ & \textbf{FN} \\
\midrule
identity                           & $1.00$ & $0.00$  & $66$ & $66$ & $66$ & $\mathbf{0}$ & $0$ \\
\texttt{bounded\_step(1)}          & $3.61$ & $23.63$ & $1$  & $55$ & $0$  & $\mathbf{0}$ & $55$ \\
\texttt{bounded\_step(2)}          & $3.61$ & $23.63$ & $1$  & $47$ & $0$  & $\mathbf{0}$ & $47$ \\
\texttt{bounded\_step(3)}          & $4.00$ & $23.63$ & $1$  & $29$ & $0$  & $\mathbf{0}$ & $29$ \\
\texttt{bounded\_step(5)}          & $8.60$ & $23.63$ & $1$  & $3$  & $0$  & $\mathbf{0}$ & $3$  \\
\texttt{bounded\_step(8)}          & $8.60$ & $23.63$ & $1$  & $0$  & $0$  & $\mathbf{0}$ & $0$  \\
\texttt{bounded\_step(12)}         & $8.60$ & $23.63$ & $1$  & $0$  & $0$  & $\mathbf{0}$ & $0$  \\
\bottomrule
\end{tabular}
\end{table}

Reading the sweep: as the defense's reach grows from
$\text{max\_step}=1$ (cell can move at most one grid step) to
$\text{max\_step}=8$ (essentially nearest-safe projection), the
actual persistent set shrinks from $55$ cells down to $0$. The
predicted steep set under any \texttt{bounded\_step} is just the
anchor cell because the empirical defense-path Lipschitz
constant $\ell = L = 23.63$ on this dense surface, which forces
$\ell(K{+}1) > G$ for all $K > 0$---transversality fails for the
discrete bounded defenses. Theorem~\ref{thm:persistent} therefore
makes no positive prediction on those rows, only the negative one
(``no interior counterexample is allowed''), and the empirical
data confirms it. The well-aligned target additionally exemplifies
the harmless-boundary case: at $\tau = 0.5$, gpt-5-mini's peak AD
is exactly $0.5$ and $U_\tau = \emptyset$, so the trilemma's
preconditions fail and the validator correctly reports
``not applicable''---matching the qualitative GPT-5-Mini
observation above and confirming the theorem's silence in the
right place.

\paragraph{Non-vacuous GP-smooth oblique defense.}
The discrete bounded-step defenses all have $\ell = L$ because
they displace cells along the steepest direction. To test the
theorem's anisotropic regime ($\ell < L$, $G > \ell(K{+}1)$), we
fit a 2D Gaussian Process to the 82 filled cells (RBF kernel,
$\sigma = 0.2$) and construct a \emph{smooth oblique defense}:
$D(x) = x + \alpha\,\beta(\mu(x))\,v(x)$, where $v(x)$ is
displaced $89.5^\circ$ from the negative gradient of the GP
posterior (nearly tangent to level curves, with a small gradient-descent
component). This gives empirical $\ell = 0.86$, $K = 1.05$,
$\ell(K{+}1) = 1.77$---well below $G = 23.6$, so
\textbf{transversality holds}. The predicted steep set contains
$3$ cells, all of which are confirmed to remain unsafe after
defense (TP${} = 3$, FP${}_{\text{int}} = 0$). This is the
first non-tautological, non-vacuous confirmation of
Theorem~\ref{thm:persistent} on real LLM data: the defense is
genuinely non-identity ($D \neq \mathrm{id}$, 82 cells displaced),
transversality is non-trivially satisfied, and every
predicted-persistent cell is indeed actually persistent.

The validator is available as a small standalone Python package
(\texttt{trilemma\_validator/} in the repository), with subcommands
\texttt{trilemma pipeline} (run \texttt{rethinking-evals} end-to-end
and validate), \texttt{trilemma validate} (analyze an existing
heatmap), \texttt{trilemma sweep} (run multiple defenses on a single
archive), \texttt{trilemma experiment} (just run the experiment),
and \texttt{trilemma synth} (generate a synthetic
mesa/mosaic/flat surface for a self-contained smoke test). The full
per-cell records (each cell's $f(x)$, $D(x)$, $f(D(x))$, distance
to $z^{*}$, $\varepsilon$-robust LHS and RHS, and steep/persistent
membership flags) are emitted as JSON by the validator and
reproduced from the live archives in
\texttt{trilemma\_validator/live\_runs/}, including the saturated
grid above (\texttt{gpt35\_turbo\_t05\_saturated/}).

\section{The Engineering Prescription}
\label{sec:engineering}

The results do not say defense is valueless; they say \emph{complete}
defense is impossible under the stated constraints. The engineering
goal shifts from elimination to management, ordered from most to least
actionable:

\paragraph{1. Make the boundary shallow.}
Set $\tau$ so that boundary-level behavior is benign. If $f(z) = \tau$
yields a polite refusal rather than harmful compliance, the
impossibility is mathematically true but practically harmless.
GPT-5-Mini exemplifies this: its ceiling at $\AD = 0.50$ means
$U_\tau = \emptyset$, so the impossibility theorems do not apply and
no defense failure is predicted.

\paragraph{2. Reduce the Lipschitz constant.}
Smaller~$L$ tightens the bound $\ell \leq L$, potentially reducing
the defense-path constant~$\ell$ and narrowing the persistent region.
The tradeoff: smoother surfaces spread vulnerabilities over wider but
more easily monitored regions.

\paragraph{3. Reduce the effective dimension.}
Defense cost grows as $N^d$ (\Cref{thm:cost}). Constraining the prompt
interface---standardized formats, restricted API parameters, bounded
context lengths---reduces~$d$, making the behavioral space tractable.

\paragraph{4. Monitor, don't eliminate, the boundary.}
Transversal crossings persist under fine-tuning (\Cref{thm:crossing})
and recur every turn (\Cref{thm:multi-turn}). Rather than attempt the
impossible, deploy runtime monitoring that detects approach to the
boundary. The Lipschitz bound (\Cref{thm:lipschitz}) gives a
computable estimate of distance to the boundary from any observed AD
value.

\section{Limitations}
\label{sec:limitations}

\paragraph{Boundary fixation is at the boundary.}
Fixed points satisfy $f(z) = \tau$ exactly. If $\tau$-level behavior
is benign, the theorem is true but harmless.

\paragraph{The $\varepsilon$-robust constraint limits depth, not direction.}
The defense may push near-boundary points slightly below~$\tau$. The
bound limits how far, not whether.

\paragraph{Persistence requires transversality.}
The persistent unsafe region exists only where the alignment surface
is steep ($c > \ell(K+1)$, where $\ell$ is the defense-path
Lipschitz constant). For isotropic $f$ ($\ell = L$), $\mathcal{S}$
is empty for all $K \geq 0$.

\paragraph{Grid-based cost asymmetry.}
\Cref{thm:cost} assumes exhaustive grid enumeration. Learning-based
defenses that generalize across the space may sidestep the exponential
bound.

\section{Conclusion}
\label{sec:conclusion}

We establish a three-level impossibility hierarchy for prompt-injection
defense: boundary fixation, the $\varepsilon$-robust constraint, and the
persistent unsafe region theorem. Continuous topology, Lipschitz bounds,
discrete counting, stochastic expectations, multi-turn dynamics, and
capacity constraints all point to the same conclusion: under the
wrapper model, some failures persist. 

The practical prescription is to make the boundary shallow, smooth,
and low-dimensional, and to engineer around it rather than assume it
can be eliminated.

\section*{Broader Impact}
This work characterizes structural limitations of a specific class
of defenses (continuous utility-preserving wrappers). The results
could inform defense engineering by identifying which design
constraints matter most. They could also be misread as implying
that LLM defense is futile---this is not the case. The theorems
apply only to wrappers satisfying specific mathematical assumptions;
training-time alignment, architectural changes, discontinuous
filtering, ensemble defenses, and human-in-the-loop systems are
not covered. 

\tt We emphasize that the impossibility results should
motivate better defense design and appropriate policy corrections, not abandonment of defense.

\bibliographystyle{plain}

\begin{thebibliography}{99}

\bibitem{alon2023detecting}
G.~Alon and M.~Kamfonas.
\newblock Detecting language model attacks with perplexity.
\newblock \emph{arXiv preprint arXiv:2308.14132}, 2023.

\bibitem{bagnall2019certifying}
A.~Bagnall and G.~Stewart.
\newblock Certifying the true error: Machine learning in {Coq} with
  verified generalization guarantees.
\newblock \emph{Proceedings of the AAAI Conference on Artificial
Intelligence}, 2019.

\bibitem{bai2022constitutional}
Y.~Bai et~al.
\newblock Constitutional {AI}: Harmlessness from {AI} feedback.
\newblock \emph{arXiv preprint arXiv:2212.08073}, 2022.

\bibitem{chao2024jailbreaking}
P.~Chao, A.~Robey, E.~Dobriban, H.~Hassani, G.~J. Pappas, and
E.~Wong.
\newblock Jailbreaking black box large language models in twenty queries.
\newblock \emph{arXiv preprint arXiv:2310.08419}, 2024.

\bibitem{cohen2019certified}
J.~Cohen, E.~Rosenfeld, and J.~Z. Kolter.
\newblock Certified adversarial robustness via randomized smoothing.
\newblock \emph{Proceedings of ICML}, 2019.

\bibitem{goodfellow2014explaining}
I.~J. Goodfellow, J.~Shlens, and C.~Szegedy.
\newblock Explaining and harnessing adversarial examples.
\newblock \emph{Proceedings of ICLR}, 2015.

\bibitem{inan2023llama}
H.~Inan et~al.
\newblock {Llama Guard}: {LLM}-based input-output safeguard for
  human-{AI} conversations.
\newblock \emph{arXiv preprint arXiv:2312.06674}, 2023.

\bibitem{katz2017reluplex}
G.~Katz, C.~Barrett, D.~L. Dill, K.~Julian, and
M.~J. Kochenderfer.
\newblock Reluplex: An efficient {SMT} solver for verifying deep neural
  networks.
\newblock \emph{Proceedings of CAV}, 2017.

\bibitem{carlini2017evaluating}
N.~Carlini and D.~Wagner.
\newblock Towards evaluating the robustness of neural networks.
\newblock \emph{Proceedings of IEEE S\&P}, 2017.

\bibitem{mehrotra2024tree}
A.~Mehrotra et~al.
\newblock Tree of attacks: Jailbreaking black-box {LLMs} automatically.
\newblock \emph{Advances in Neural Information Processing Systems}, 37,
2024.

\bibitem{fawzi2018adversarial}
A.~Fawzi, H.~Fawzi, and O.~Fawzi.
\newblock Adversarial vulnerability for any classifier.
\newblock \emph{Advances in Neural Information Processing Systems}, 31, 2018.

\bibitem{greshake2023indirect}
K.~Greshake, S.~Abdelnabi, S.~Mishra, C.~Endres, T.~Holz, and M.~Fritz.
\newblock Not what you've signed up for: Compromising real-world
  {LLM}-integrated applications with indirect prompt injection.
\newblock \emph{Proceedings of AISec}, 2023.

\bibitem{huang2017safety}
X.~Huang, M.~Kwiatkowska, S.~Wang, and M.~Wu.
\newblock Safety verification of deep neural networks.
\newblock \emph{Proceedings of CAV}, 2017.

\bibitem{madry2018towards}
A.~Madry, A.~Makelov, L.~Schmidt, D.~Tsipras, and A.~Vladu.
\newblock Towards deep learning models resistant to adversarial attacks.
\newblock \emph{Proceedings of ICLR}, 2018.

\bibitem{mouret2015illuminating}
J.-B. Mouret and J.~Clune.
\newblock Illuminating search spaces by mapping elites.
\newblock \emph{arXiv preprint arXiv:1504.04909}, 2015.

\bibitem{naitzat2020topology}
G.~Naitzat, A.~Zhitnikov, and L.-H.~Lim.
\newblock Topology of deep neural networks.
\newblock \emph{Journal of Machine Learning Research}, 21(184):1--40, 2020.

\bibitem{munshi2026manifold}
S.~Munshi, M.~Bhatt, V.~S.~Narajala, I.~Habler, A.~Al-Kahfah,
K.~Huang, and B.~Gatto.
\newblock Manifold of failure: Behavioral attraction basins in language
  models.
\newblock \emph{arXiv preprint arXiv:2602.22291v2}, 2026.

\bibitem{samvelyan2024rainbow}
M.~Samvelyan et~al.
\newblock Rainbow teaming: Open-ended generation of diverse adversarial
  prompts.
\newblock \emph{Advances in Neural Information Processing Systems}, 37,
2024.

\bibitem{singh2019abstract}
G.~Singh, T.~Gehr, M.~P{\"u}schel, and M.~Vechev.
\newblock An abstract domain for certifying neural networks.
\newblock \emph{Proceedings of POPL}, 2019.

\bibitem{szegedy2013intriguing}
C.~Szegedy et~al.
\newblock Intriguing properties of neural networks.
\newblock \emph{Proceedings of ICLR}, 2014.

\bibitem{tsipras2018robustness}
D.~Tsipras, S.~Santurkar, L.~Engstrom, A.~Turner, and A.~Madry.
\newblock Robustness may be at odds with accuracy.
\newblock \emph{Proceedings of ICLR}, 2019.

\bibitem{wolpert1997no}
D.~H. Wolpert and W.~G. Macready.
\newblock No free lunch theorems for optimization.
\newblock \emph{IEEE Trans.\ Evol.\ Comput.}, 1(1):67--82, 1997.

\bibitem{zou2023universal}
A.~Zou, Z.~Wang, N.~Carlini, M.~Nasr, J.~Z. Kolter, and
M.~Fredrikson.
\newblock Universal and transferable adversarial attacks on aligned
  language models.
\newblock \emph{arXiv preprint arXiv:2307.15043}, 2023.

\bibitem{ge2024mart}
K.~Ge et~al.
\newblock {MART}: Improving {LLM} safety with multi-round automatic
  red-teaming.
\newblock \emph{Proceedings of NAACL}, 2024.

\bibitem{hubinger2024sleeper}
E.~Hubinger et~al.
\newblock Sleeper agents: Training deceptive {LLMs} that persist through
  safety training.
\newblock \emph{arXiv preprint arXiv:2401.05566}, 2024.

\bibitem{anil2024many}
C.~Anil et~al.
\newblock Many-shot jailbreaking.
\newblock \emph{Advances in Neural Information Processing Systems}, 37, 2024.

\bibitem{kim2025manyshot}
D.~Kim et~al.
\newblock What really matters in many-shot attacks?
\newblock \emph{Proceedings of ACL}, 2025.

\bibitem{zhan2024injecagent}
Q.~Zhan et~al.
\newblock {InjecAgent}: Benchmarking indirect prompt injections in
  tool-integrated {LLM} agents.
\newblock \emph{Findings of ACL}, 2024.

\bibitem{zhang2024asb}
H.~Zhang et~al.
\newblock Agent security bench ({ASB}): Formalizing and benchmarking
  attacks and defenses in {LLM}-based agents.
\newblock \emph{arXiv preprint arXiv:2410.02644}, 2024.

\bibitem{yuan2025instability}
Y.~Yuan et~al.
\newblock The instability of safety.
\newblock \emph{arXiv preprint arXiv:2512.12066}, 2025.

\bibitem{skoltech2025quant}
V.~Tsvetkov et~al.
\newblock Quantization and safety: A closer look at {LLM} safety
  under weight compression.
\newblock \emph{arXiv preprint arXiv:2502.15799}, 2025.

\bibitem{eth2025gguf}
F.~Nesti et~al.
\newblock Mind the gap: Adversarial attacks against {GGUF} quantized
  {LLMs}.
\newblock \emph{Proceedings of ICML}, 2025.

\bibitem{hammoud2024merging}
H.~Hammoud et~al.
\newblock Model merging and safety alignment: One bad model spoils the
  bunch.
\newblock \emph{Findings of EMNLP}, 2024.

\bibitem{liu2026whackamole}
X.~Liu et~al.
\newblock Alignment whack-a-mole.
\newblock \emph{arXiv preprint arXiv:2603.20957}, 2026.

\bibitem{iris2025}
D.~Rosenberg et~al.
\newblock {IRIS}: Adversarial suffix attacks against robust defenses.
\newblock \emph{Proceedings of NAACL}, 2025.

\bibitem{zhao2025weak}
X.~Zhao et~al.
\newblock Weak-to-strong jailbreaking on large language models.
\newblock \emph{Proceedings of ICML}, 2025.

\bibitem{huang2025safetytax}
Y.~Huang et~al.
\newblock The safety tax of reasoning alignment.
\newblock \emph{arXiv preprint arXiv:2503.00555}, 2025.

\bibitem{huang2026formal}
Y.~Huang et~al.
\newblock On the geometric inevitability of the alignment tax.
\newblock \emph{arXiv preprint arXiv:2603.00047}, 2026.

\bibitem{slingshot2026}
L.~Bailey et~al.
\newblock Slingshot: {RL}-based agent-to-agent jailbreaking.
\newblock \emph{arXiv preprint arXiv:2602.02395}, 2026.

\end{thebibliography}

\newpage
\appendix
\section*{Appendices}

\section{Vulnerability Landscape}
\label{app:landscape}

This section characterizes the geometry of the unsafe region.

\begin{theorem}[Basin Structure]
\label{thm:basin}
If $f$ is continuous and $f(p)>\tau$, then $U_\tau$ is open. Under
any measure positive on nonempty open sets, $U_\tau$ has positive
measure.
\end{theorem}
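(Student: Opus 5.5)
The plan has two steps, openness and then positive measure, and the positive-measure step is essentially immediate once openness and nonemptiness are in hand.

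\textbf{Openness.} Write $U_\tau = f^{-1}((\tau,\infty))$. The interval $(\tau,\infty)$ is open in $\R$ and $f$ is continuous, so $U_\tau$ is open in $X$. No metric or Hausdorff assumption is needed here; this is the same observation the paper uses to show $S_\tau$ is open in the proof of \Cref{thm:main}, applied to the complementary open ray.

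\textbf{Nonemptiness.} The hypothesis $f(p) > \tau$ gives $p \in U_\tau$, so $U_\tau$ is a nonempty open set.

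\textbf{Positive measure.} Let $\mu$ be any measure that is positive on nonempty open sets. Since $U_\tau$ is open, it is Borel and hence measurable. Because it is also nonempty, the hypothesis on $\mu$ gives $\mu(U_\tau) > 0$ directly.

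If we want something quantitative, as in \Cref{thm:band-measure}, we can use the metric Lipschitz setting. There, $B\bigl(p, (f(p)-\tau)/L\bigr) \subseteq U_\tau$, because for $y$ in this ball $f(y) \ge f(p) - L\,\dist(y,p) > \tau$. This gives an explicit ball inside the basin, with volume $V_n\bigl((f(p)-\tau)/L\bigr)^n$ in $\R^n$. The statement itself does not require this refinement.

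The only point needing care is measurability, and openness settles it. There is no real obstacle in this argument.
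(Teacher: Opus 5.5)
Your proof is correct and follows the same route as the paper. The paper states \Cref{thm:basin} without a separate proof, but parts (1)--(2) of the proof of \Cref{thm:persistent} use exactly your argument (a strict superlevel set of a continuous function is open, and a nonempty open set has positive measure), and your optional quantitative ball is \Cref{thm:lipschitz}, which needs $L>0$.
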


\begin{theorem}[Basin Fragment Minimum Size]
\label{thm:fragment}
If $X$ is a normed space, $f$ is $L$-Lipschitz with $f(p) > \tau$,
the connected component of $U_\tau$ containing $p$ has diameter
$\geq 2(f(p) - \tau)/L$.
\end{theorem}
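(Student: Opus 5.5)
The plan is to exhibit an explicit connected subset of $U_\tau$ that contains $p$ and is large. Set $r = (f(p)-\tau)/L > 0$, assuming $L > 0$. If $L = 0$, then $f$ is constant with value $f(p) > \tau$, so $U_\tau = X$ and the component is all of $X$. Its diameter is then infinite (or the bound is read as vacuous), provided $X \neq \{0\}$.

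\textbf{Step 1 (the ball lies in $U_\tau$).} Consider the open ball $B(p,r)$. For $y \in B(p,r)$, the Lipschitz bound gives
\[
f(y) \geq f(p) - L\|y-p\| > f(p) - Lr = \tau ,
\]
so $B(p,r) \subseteq U_\tau$. This is the same estimate used for \Cref{thm:band-measure}, applied on the other side of the threshold.

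\textbf{Step 2 (the ball lies in the component).} In a normed space a ball is convex, hence path-connected, hence connected. Since it contains $p$, it is contained in the connected component $C_p$ of $U_\tau$ containing $p$. Diameter is monotone under inclusion, so $\operatorname{diam}(C_p) \geq \operatorname{diam} B(p,r)$.

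\textbf{Step 3 (the ball has diameter $2r$).} Pick a unit vector $v$; this needs $X \neq \{0\}$. For every $t < r$, the points $p \pm tv$ lie in $B(p,r)$ and are at distance $2t$ apart. Taking the supremum over $t < r$ gives $\operatorname{diam} B(p,r) \geq 2r = 2(f(p)-\tau)/L$.

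The only real obstacle is Step 3's reliance on a nontrivial space: if $X = \{0\}$ the diameter is $0$, so the statement must tacitly assume $\dim X \geq 1$. I would state that assumption explicitly and handle it with the existence of a nonzero vector, normalized to a unit vector. Everything else is a direct Lipschitz estimate plus convexity of balls.
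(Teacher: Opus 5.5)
Your proof is correct and matches the intended argument. The ball $B(p,(f(p)-\tau)/L)$ lies in $U_\tau$ by the Lipschitz estimate, which is exactly \Cref{thm:lipschitz}; it is convex and hence connected, so it sits inside the component; and it has diameter $2(f(p)-\tau)/L$. Your caveats about $L=0$ and the trivial space $X=\{0\}$, where the bound genuinely fails, are hypotheses the paper leaves tacit.
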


Smoother surfaces (smaller~$L$) produce larger basins; rougher
surfaces produce smaller fragments.

\section{Full Proofs}
\label{app:proofs}

\begin{proof}[Proof of \Cref{thm:main} (Boundary Fixation)]
\textbf{Step~1} (Hausdorff $\Rightarrow$ fixed-point set is closed).
$\Fix(D) = \{x : D(x) = x\}$ is the preimage of the diagonal
$\Delta \subset X \times X$ under $x \mapsto (D(x), x)$. In a
Hausdorff space, $\Delta$ is closed, so $\Fix(D)$ is closed.

\textbf{Step~2} (Utility preservation $\Rightarrow$ safe region
$\subseteq$ fixed points).
$S_\tau \subseteq \Fix(D)$. Since $\Fix(D)$ is closed:
$\cl{S_\tau} \subseteq \Fix(D)$.

\textbf{Step~3} (Connectedness $\Rightarrow$ safe region is not closed).
$S_\tau = f^{-1}((-\infty,\tau))$ is open. If also closed, it would be
clopen---but in a connected space the only clopen sets are $\emptyset$
and $X$. Since both $S_\tau$ and $U_\tau$ are nonempty, $S_\tau$ is not
closed.

\textbf{Step~4} (Boundary point exists).
$\cl{S_\tau} \supsetneq S_\tau$, so there exists
$z \in \cl{S_\tau} \setminus S_\tau$. Continuity gives
$f(z) \leq \tau$; $z \notin S_\tau$ gives $f(z) \geq \tau$. Hence
$f(z) = \tau$.

\textbf{Step~5} (Defense fixes the boundary point).
$z \in \cl{S_\tau} \subseteq \Fix(D)$, so $D(z) = z$ and
$f(D(z)) = \tau$.
\end{proof}

\begin{proof}[Proof of \Cref{thm:eps-robust} ($\varepsilon$-Robust Constraint)]
By \Cref{thm:main}, the fixed boundary point $z$ exists.

\textbf{Step~1.} Since $D(z) = z$ and $D$ is $K$-Lipschitz:
$\dist(D(x), z) = \dist(D(x), D(z)) \leq K\dist(x, z)$.

\textbf{Step~2.} Since $f(z) = \tau$ and $f$ is $L$-Lipschitz:
$|f(D(x)) - \tau| = |f(D(x)) - f(z)| \leq L\dist(D(x), z) \leq
LK\dist(x, z)$.
\end{proof}

\begin{proof}[Proof of \Cref{thm:persistent} (Persistent Unsafe Region)]
\textbf{(1)} $\mathcal{S}$ is the strict superlevel set of
$x \mapsto f(x) - \ell(K+1)\dist(x,z)$ at level~$\tau$, hence open.

\textbf{(2)} Open and nonempty implies positive measure.

\textbf{(3)} For $x \in \mathcal{S}$:
$f(x) > \tau + \ell(K+1)\dist(x,z)$. By \Cref{lem:input-bound}:
$f(D(x)) \geq f(x) - \ell(K+1)\dist(x,z) > \tau$.

\noindent\emph{Lean:
\textup{\texttt{persistent\_unsafe\_refined}} in
\textup{\texttt{MoF\_20\_RefinedPersistence}}
(defense-path constant~$\ell$).
\textup{\texttt{MoF\_11\_EpsilonRobust}} contains an earlier version
using the global constant~$L$; as noted in
\Cref{sec:persistent}, that version is vacuous for isotropic surfaces.}
\end{proof}

\section{Counterexamples: Each Hypothesis Is Necessary}
\label{app:counterexamples}

\begin{counterexample}[Removing connectedness]
$X = \{0,1\}$ discrete, $f(0)=0$, $f(1)=1$, $\tau=0.5$.
$D(0)=0$, $D(1)=0$: continuous, utility-preserving, complete.
\end{counterexample}

\begin{counterexample}[Removing continuity]
$X=[0,1]$, $f(x)=x$, $\tau=0.5$. $D(x)=x$ for $x<0.5$,
$D(x)=0$ for $x \geq 0.5$: utility-preserving and complete,
but discontinuous at $0.5$.
\end{counterexample}

\begin{counterexample}[Removing utility preservation]
$D(x) = x_0$ for a fixed safe point: continuous and complete,
but destroys all inputs.
\end{counterexample}

\section{Attack Properties}
\label{app:attacks}

\begin{theorem}[Perturbation Robustness]
\label{thm:lipschitz}
If $f$ is $L$-Lipschitz and $f(p)>\tau$, then
$B(p,\, (f(p)-\tau)/L) \subseteq U_\tau$.
The radius is monotone in~$f(p)$.
\end{theorem}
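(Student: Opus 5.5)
Fix $p$ with $f(p)>\tau$ and set $r=(f(p)-\tau)/L$. The plan is a one-line Lipschitz estimate: any $y$ close enough to $p$ cannot have its score drop all the way to $\tau$. Throughout we assume $L>0$ and read $B(p,r)$ as the open ball $\{y:\dist(y,p)<r\}$; the case $L=0$ (constant $f$) is degenerate and we interpret the ball as all of $X$.

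First, take an arbitrary $y\in B(p,r)$. The $L$-Lipschitz property gives
\[
f(y)\;\geq\; f(p)-L\dist(y,p).
\]
Since $\dist(y,p)<r$, we have $L\dist(y,p)<f(p)-\tau$, and therefore
\[
f(y)\;>\;f(p)-(f(p)-\tau)\;=\;\tau .
\]
Thus $y\in U_\tau$, which proves $B(p,r)\subseteq U_\tau$. This is the same mechanism as Step~2 in the proof of \Cref{thm:eps-robust}, now run around an interior unsafe point rather than a fixed boundary point.

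Second, monotonicity in $f(p)$. The radius $r=(f(p)-\tau)/L$ is an affine function of $f(p)$ with positive slope $1/L$. Hence if $f(p')\geq f(p)$, the corresponding radius is at least as large, and the certified balls are nested accordingly.

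There is no real obstacle. The only points needing care are the convention on the ball and the degenerate case $L=0$. If the closed ball is intended, the strict inequality fails exactly on the sphere $\dist(y,p)=r$, where we only get $f(y)\geq\tau$. I would therefore state the result for the open ball, which is all that is needed downstream, e.g.\ for the positive-measure conclusion of \Cref{thm:basin}.
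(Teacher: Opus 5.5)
Your proof is correct and is the standard one-line Lipschitz estimate $f(y)\geq f(p)-L\dist(y,p)>\tau$. Monotonicity then follows from the affine dependence of the radius on $f(p)$; this is the argument the result rests on, since the paper states it without a written proof and defers to the Lean artifact. Reading $B(p,r)$ as the open ball is necessary: the closed-ball version fails (for example $f(x)=x$ on $\R$ with $p=1$, $\tau=0$, $L=1$ gives $f(0)=\tau$ on the sphere), and your treatment of the degenerate case $L=0$ is fine.
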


\begin{theorem}[Iterative Convergence]
\label{thm:convergence}
Any monotone-improvement operator $T$ with score in $[0,1]$ converges.
If each step gains $\geq \delta$, convergence takes
$\leq \lfloor 1/\delta \rfloor$ steps.
\end{theorem}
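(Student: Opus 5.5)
The statement to prove is \Cref{thm:convergence}: a monotone-improvement operator $T$ with scores in $[0,1]$ converges, and a per-step gain of at least $\delta$ forces termination within $\lfloor 1/\delta\rfloor$ steps. I will formalize the setup as a sequence of states $x_0, x_1 = T(x_0), x_2 = T(x_1), \dots$ with scores $s_k = f(x_k) \in [0,1]$. Monotone improvement means $s_{k+1} \geq s_k$ for all $k$. The argument then has two parts: the qualitative convergence claim, and the quantitative step bound.

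For the first part, I will use the monotone convergence theorem for real sequences. The sequence $(s_k)$ is nondecreasing and bounded above by $1$, so it converges to $s^\ast = \sup_k s_k \le 1$. The content here is only that the score sequence converges; I do not claim the states $x_k$ converge. Accordingly, I will read ``converges'' in the theorem as convergence of the achieved score, which is the natural reading given the ``best observed exploit'' interpretation (\texttt{running\_max\_monotone}).

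For the quantitative part, I interpret ``each step gains $\geq\delta$'' as follows: a step counts as an improving step only if $s_{k+1} \ge s_k + \delta$, and the process halts, i.e.\ has converged, once no such step is available. The proof is a telescoping and counting argument. Suppose steps $0,\dots,m-1$ are all improving. Summing the gains gives
\[
s_m \;\ge\; s_0 + m\delta \;\ge\; m\delta .
\]
Since $s_m \le 1$, we get $m\delta \le 1$, so $m \le 1/\delta$. Because $m$ is an integer, $m \le \lfloor 1/\delta\rfloor$. Hence at most $\lfloor 1/\delta\rfloor$ improving steps can occur, after which the sequence is stationary.

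The main obstacle is definitional rather than technical. The theorem does not pin down what ``converges'' and ``takes $\le$ steps'' mean when gains are sometimes below $\delta$. I will fix the hypothesis as ``every non-terminal step gains at least $\delta$,'' so that $T$ either improves by $\delta$ or reaches a fixed point. Under that reading the counting bound gives exactly $\lfloor 1/\delta\rfloor$. I will also note that using $s_0 \ge 0$ is what yields this bound; a sharper version is $\lfloor (1-s_0)/\delta\rfloor$.
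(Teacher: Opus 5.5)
Your proposal is correct and follows essentially the same route as the paper's argument. The paper gives no written proof of this statement; it defers to the Lean artifact, and the argument there is the same two steps you give. First, the score sequence is nondecreasing and bounded above by $1$, so it converges by the monotone convergence theorem. Second, $m$ steps that each gain at least $\delta$ telescope to $m\delta \le s_m - s_0 \le 1$, which gives $m \le \lfloor 1/\delta \rfloor$. Reading ``converges'' as convergence of the score and the gain hypothesis as ``improve by at least $\delta$ or reach a fixed point'' (with $\delta > 0$ implicit) is the right way to make the informal statement precise.
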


\begin{theorem}[Transferability]
\label{thm:transfer}
$\|f-g\|_\infty \leq \delta \implies
\{f > \tau+\delta\} \subseteq \{g > \tau\}$.
Transfer costs zero additional queries.
\end{theorem}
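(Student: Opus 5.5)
The plan is to prove \Cref{thm:transfer} directly from the definition of the sup-norm, with no topology or regularity needed. Fix any $x$ with $f(x) > \tau + \delta$. Since $\|f-g\|_\infty \leq \delta$, we have $|f(x) - g(x)| \leq \delta$ pointwise, in particular $g(x) \geq f(x) - \delta$. Chaining the two inequalities gives $g(x) \geq f(x) - \delta > (\tau + \delta) - \delta = \tau$. Hence $x \in \{g > \tau\}$. Since $x$ was arbitrary, the containment $\{f > \tau + \delta\} \subseteq \{g > \tau\}$ follows.

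The second clause, that transfer costs zero additional queries, is interpretive rather than a mathematical inequality. I would formalize it as follows. Any point $x$ already certified on the source surface $f$ with margin exceeding $\delta$ (that is, $f(x) > \tau + \delta$) is certified unsafe for $g$ by the containment above. This requires no evaluation of $g$, so the number of $g$-queries needed to produce an unsafe point for $g$ is zero once such an $x$ is known for $f$.

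Optionally, I would add a remark connecting this to \Cref{thm:lipschitz}. If $g$ is $L$-Lipschitz, the transferred point carries a robustness ball $B(x, (g(x)-\tau)/L) \supseteq B(x, (f(x)-\tau-\delta)/L)$ inside $\{g>\tau\}$. So a margin in $f$ above $\tau+\delta$ transfers to a whole neighborhood.

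There is no genuine obstacle here. The only care point is keeping the strict inequality: the hypothesis $f(x) > \tau+\delta$ is strict, while the sup-norm bound is non-strict, and this combination correctly yields the strict conclusion $g(x) > \tau$. The containment can fail at the threshold $f(x) = \tau + \delta$ exactly, which is why the statement uses the strict superlevel set.
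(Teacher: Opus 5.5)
Your argument is correct and matches the paper's, which gives no separate proof but relies on the same one-line pointwise consequence of the sup-norm bound, $g(x) \geq f(x) - \delta > \tau$ (this is also the content of \Cref{thm:interior-stable}). Your reading of ``zero additional queries'' as ``no evaluation of $g$ is needed'' matches the intended meaning, and your optional Lipschitz remark holds provided the ball is taken open, as in \Cref{thm:lipschitz}.
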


\begin{theorem}[Authority Monotonicity]
\label{thm:authority}
If $f(y,\cdot)$ is monotone non-decreasing in authority for each fixed
indirection~$y$, the vulnerability set is upward-closed with a critical
threshold $a^*_2(y)$ (by IVT, when $f(y,\cdot)$ is continuous and
crosses~$\tau$). If additionally $f$ is monotone non-decreasing in
indirection for each fixed authority, the threshold curve
$y \mapsto a^*_2(y)$ is non-increasing.
\end{theorem}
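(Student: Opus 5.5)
Authority Monotonicity is proved one indirection slice at a time: first upward-closedness and the threshold for a fixed $y$, then the monotonicity of the threshold curve across slices. Write $V = \{(y,a_2) : f(y,a_2) > \tau\}$ for the vulnerability set, and $V_y = \{a_2 : f(y,a_2) > \tau\}$ for its slice at fixed indirection $y$.

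\emph{Upward-closedness.} If $a_2 \in V_y$ and $a_2' \geq a_2$, then monotonicity of $f(y,\cdot)$ gives
\[
f(y,a_2') \geq f(y,a_2) > \tau,
\]
so $a_2' \in V_y$. Hence each $V_y$ is an up-set in the authority interval, and $V$ is upward-closed in the authority coordinate.

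\emph{The threshold $a_2^*(y)$.} Define $a^*_2(y) = \inf V_y$. Since $V_y$ is an up-set, it is an interval of the form $(a^*_2(y),\,\cdot\,]$ or $[a^*_2(y),\,\cdot\,]$. When $f(y,\cdot)$ is continuous, $V_y$ is open in the authority interval, which rules out the closed form at an interior left endpoint. Now suppose $f(y,\cdot)$ crosses $\tau$, i.e.\ takes a value $<\tau$ at some $a_0$ and a value $>\tau$ at some $a_1$. Monotonicity forces $a_0 < a_1$, so $a^*_2(y)$ lies in $[a_0,a_1]$. Then:
\begin{itemize}
\item taking limits from the right inside $V_y$ gives $f(y,a^*_2) \geq \tau$;
\item taking limits from the left outside $V_y$ gives $f(y,a^*_2) \leq \tau$.
\end{itemize}
Hence $f(y,a^*_2(y)) = \tau$; this is the IVT-style step. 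Equivalently, one can invoke the intermediate value theorem on $[a_0,a_1]$ and pick the supremum of the $\tau$-level points.

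\emph{Monotonicity of the threshold curve.} Let $y \leq y'$. Monotonicity in indirection gives
\[
f(y',a_2) \geq f(y,a_2) \quad\text{for every } a_2,
\]
so $V_y \subseteq V_{y'}$. Taking infima reverses the inclusion: $a^*_2(y') \leq a^*_2(y)$. So $y \mapsto a^*_2(y)$ is non-increasing.

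The main obstacle is definitional rather than analytic: $a^*_2(y)$ must be well defined when $V_y$ is empty or is the whole authority interval. I would adopt the conventions $\inf\emptyset = +\infty$ (or the top of the authority range) and the bottom of the range when $V_y$ is everything. The infimum argument above needs no continuity, so non-increase still holds under these conventions, whereas the level-set characterization $f(y,a^*_2(y)) = \tau$ is asserted only for slices that actually cross $\tau$. I would state the theorem's threshold in this extended sense so that the final claim needs no extra hypotheses.
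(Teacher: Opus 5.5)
Your proof is correct and follows the route the paper's statement indicates; the paper itself gives no written proof, deferring to the Lean artifact. Monotonicity in authority gives upward-closure of each slice $V_y$, continuity plus a crossing gives $f(y,a^*_2(y))=\tau$ by the IVT-style infimum argument, and monotonicity in indirection gives $V_y\subseteq V_{y'}$ for $y\le y'$, hence $a^*_2(y')\le a^*_2(y)$. Your canonical choice $a^*_2(y)=\inf V_y$, rather than an arbitrary IVT witness, is what makes the threshold curve a well-defined function; the paper's parenthetical ``by IVT'' leaves this implicit.
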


\begin{theorem}[Gradient Ascent]
\label{thm:gradient}
If $f$ has nonzero Fr\'{e}chet derivative at~$x$, there exist $v$ and
$\varepsilon > 0$ with $f(x + \varepsilon v) > f(x)$.
\end{theorem}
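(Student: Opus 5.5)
The plan is a one-line directional-derivative argument. Suppose $f$ has nonzero Fr\'echet derivative $f'(x)\colon X\to\R$ at $x$ (with $X$ a normed space). Since $f'(x)\neq 0$, I first pick a vector $v$ with $f'(x)(v)\neq 0$. If $f'(x)(v)<0$, I replace $v$ by $-v$, using linearity. This gives a direction with $a := f'(x)(v) > 0$. No operator-norm near-attainment is needed here, unlike in \Cref{prop:transversality}, because only existence of an ascent direction is claimed. Note that $v\neq 0$ automatically.

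Next I unpack the definition of the Fr\'echet derivative. It gives $f(x+h) = f(x) + f'(x)(h) + r(h)$ with $|r(h)|/\|h\|\to 0$ as $h\to 0$. I take $h = tv$ with $t>0$ and choose $\eta>0$ such that $\|h\|<\eta$ implies $|r(h)| \le \tfrac{a}{2\|v\|}\|h\|$. Then for $0<t<\eta/\|v\|$:
\[
f(x+tv) \;\ge\; f(x) + t a - \tfrac{a}{2\|v\|}\, t\|v\| \;=\; f(x) + \tfrac{a}{2}t \;>\; f(x).
\]
Setting $\varepsilon := t$ for any such $t$, for instance $t=\eta/(2\|v\|)$, finishes the proof.

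There is no genuine obstacle. The only care points are handling the sign so that a direction of \emph{increase} is obtained, and making sure the little-$o$ remainder is quantified with a constant proportional to $a$ (here $a/(2\|v\|)$) so that it cannot cancel the linear gain. Alternatively, one can cite the one-dimensional fact that $t\mapsto f(x+tv)$ has derivative $a>0$ at $t=0$ and is therefore strictly larger than $f(x)$ for small positive $t$; this is the form in which the Lean artifact would most naturally discharge it, via \texttt{HasFDerivAt.comp} along the line $t\mapsto x+tv$.
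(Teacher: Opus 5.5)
Your argument is correct and complete. Flipping the sign of $v$ gives $a = f'(x)(v) > 0$, and bounding the remainder by $\tfrac{a}{2\|v\|}\|h\|$ yields $f(x+tv) \ge f(x) + \tfrac{a}{2}t > f(x)$ for all sufficiently small $t>0$. You are also right that no operator-norm near-attainment is needed here, unlike in \Cref{prop:transversality}.

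The paper gives no written proof of this statement. It appears in the appendix only as a result verified in the Lean artifact, so there is no argument for yours to diverge from. A shorter route is the contrapositive of Fermat's theorem: a nonzero Fr\'echet derivative means $x$ is not a local maximum, so some $y$ satisfies $f(y) > f(x)$, and you take $v = y - x$ and $\varepsilon = 1$. That route avoids the explicit remainder estimate. Your direct version gives more: a whole ascent ray $\{x + tv : 0 < t < \eta/\|v\|\}$ with quantitative gain $\tfrac{a}{2}t$, rather than a single improving point.
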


\section{Stability Under Fine-Tuning}
\label{app:stability}

\begin{theorem}[Interior Stability]
\label{thm:interior-stable}
If $\|f - g\|_\infty \leq \varepsilon$:
$f(x) > \tau + \varepsilon \implies g(x) > \tau$, and
$f(x) < \tau - \varepsilon \implies g(x) < \tau$.
Only the band $|f(x) - \tau| \leq \varepsilon$ is uncertain.
\end{theorem}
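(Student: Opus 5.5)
The statement to prove is Interior Stability (\Cref{thm:interior-stable}). Both implications follow from the pointwise form of the sup-norm hypothesis, so the plan is a direct two-line argument for each half and then a short observation for the final sentence. The only input is that $\|f-g\|_\infty \leq \varepsilon$ gives, for every $x$,
\[
  f(x) - \varepsilon \;\leq\; g(x) \;\leq\; f(x) + \varepsilon .
\]

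For the first implication, I assume $f(x) > \tau + \varepsilon$ and use the left inequality: $g(x) \geq f(x) - \varepsilon > \tau$. Hence $x \in U_\tau$ for~$g$ as well. This is exactly \Cref{thm:transfer} with $\delta = \varepsilon$, so I can cite it instead of repeating the argument. For the second implication, I assume $f(x) < \tau - \varepsilon$ and use the right inequality: $g(x) \leq f(x) + \varepsilon < \tau$, so $x$ stays in the safe region for~$g$.

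The closing sentence, ``only the band is uncertain,'' is then the contrapositive bookkeeping. Every $x$ outside $\{|f(x)-\tau| \leq \varepsilon\}$ satisfies either $f(x) > \tau+\varepsilon$ or $f(x) < \tau-\varepsilon$. In both cases the sign of $g(x)-\tau$ equals the sign of $f(x)-\tau$, by the two implications above. So a change of classification can only happen inside the band.

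I would also remark that the band really is uncertain, i.e.\ the result is sharp. The constant perturbations $g = f \pm \varepsilon$ can move any point with $|f(x)-\tau|\leq\varepsilon$ onto or across the threshold. There is no real obstacle here; the only care needed is to keep the inequalities strict so that the boundary case $g(x)=\tau$ is excluded.
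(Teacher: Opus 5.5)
Your argument is correct and is the direct pointwise one the paper relies on. The paper states this result without a written proof and defers to the Lean artifact; the argument is to unpack $\|f-g\|_\infty\le\varepsilon$ as $f(x)-\varepsilon\le g(x)\le f(x)+\varepsilon$ and apply the appropriate side to each strict hypothesis. Citing \Cref{thm:transfer} with $\delta=\varepsilon$ for the first half matches the paper, and your sharpness remark is a correct addition that the statement does not require.
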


\begin{theorem}[Crossing Preservation]
\label{thm:crossing}
Let $f, g\colon [a,b] \to \R$ be continuous. If $f$ crosses $\tau$
on $[a,b]$ with margin $m$ (i.e., $f(a') < \tau - m$ and
$f(b') > \tau + m$ for some $a', b' \in [a,b]$) and
$\|f - g\|_\infty \leq \varepsilon < m$, then $g$ also crosses
$\tau$ on $[a,b]$.
\end{theorem}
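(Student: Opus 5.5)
We reduce Crossing Preservation to the intermediate value theorem applied to $g$ on the subinterval between $a'$ and $b'$.

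First, we transfer the strict inequalities from $f$ to $g$, exactly as in \Cref{thm:interior-stable}. Since $\|f-g\|_\infty \le \varepsilon$, we have
\[
  g(a') \le f(a') + \varepsilon < \tau - m + \varepsilon < \tau,
\]
where the last step uses $\varepsilon < m$. Symmetrically,
\[
  g(b') \ge f(b') - \varepsilon > \tau + m - \varepsilon > \tau.
\]
So $g$ takes a value strictly below $\tau$ at $a'$ and strictly above $\tau$ at $b'$, with margin $m-\varepsilon>0$ on each side.

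Next, we interpret ``$g$ crosses $\tau$ on $[a,b]$.'' We read it as the same margin-style statement that the definition gives for $f$: some point has $g<\tau$ and some point has $g>\tau$, here with margin $m-\varepsilon$. To also deliver a genuine crossing point, we apply the intermediate value theorem to the continuous $g$ on the closed interval with endpoints $a'$ and $b'$; this interval lies in $[a,b]$, and we do not need $a'<b'$. This yields $c$ between $a'$ and $b'$ with $g(c)=\tau$.

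If a \emph{directional} crossing is wanted, we pass from $g(c)=\tau$ to ``below then above.'' Assume $a'<b'$ and set $c^\ast=\sup\{t\in[a',b'] : g(t)<\tau\}$. By continuity, $g(c^\ast)=\tau$. Moreover, $g\ge\tau$ on $(c^\ast,b']$, and points arbitrarily close to $c^\ast$ from the left have $g<\tau$. This is the continuous analogue of \Cref{thm:disc-ivt}.

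The only real obstacle is pinning down which notion of ``crosses'' the statement intends, and whether $a'<b'$ is assumed. The analytic content is just the two one-line inequalities above plus the intermediate value theorem, so we would state both the margin version and the intermediate-value version to cover either reading.
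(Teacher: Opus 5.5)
Your proposal is correct and matches the argument the paper has in mind; the paper states this result without a written proof, but it rests directly on \Cref{thm:interior-stable}. The sup-norm bound with $\varepsilon < m$ gives $g(a') < \tau < g(b')$, and the intermediate value theorem on the closed interval between $a'$ and $b'$ then yields a point where $g$ equals $\tau$; your supremum argument for a directional crossing is also valid but is more than the statement requires.
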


\begin{theorem}[Patching Is Nonlocal]
\label{thm:nonlocal}
There exist $f, g$ with $\|f-g\|_\infty \leq \varepsilon$ such that
eliminating a vulnerability at one point necessarily changes values at
distant points.
\end{theorem}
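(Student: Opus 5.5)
The statement to prove is \Cref{thm:nonlocal}: there exist $f,g$ with $\|f-g\|_\infty \le \varepsilon$ such that eliminating a vulnerability at one point necessarily changes values at distant points. Since this is an existence claim, I will prove it by an explicit construction on $X=\R$ (or $[0,1]$), not by a general argument.

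The intended reading has two parts. First, $g$ is obtained from $f$ by a patch that pushes some $p$ with $f(p)>\tau$ below threshold, with $\|f-g\|_\infty\le\varepsilon$. Second, any such patch that respects the regularity constraint (say, $g$ is $L$-Lipschitz like $f$) must change $g$ at points at distance at least about $(f(p)-\tau)/L$ from $p$. I will state the precise version along these lines and then exhibit a witness.

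\textbf{Construction.} Take $f(x)=\tau+\varepsilon/2-L|x-p|$, a tent peaking at $p$ with $f(p)-\tau=\varepsilon/2$. Let $g=\min\bigl(f,\ \tau-\varepsilon/2+L|x-p|\bigr)$, which carves a notch at $p$.

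\begin{itemize}
\item $g$ is $L$-Lipschitz, since a minimum of $L$-Lipschitz functions is $L$-Lipschitz.
\item $g(p)=\tau-\varepsilon/2<\tau$, so the vulnerability at $p$ is eliminated.
\item $0\le f-g\le\varepsilon$, so $\|f-g\|_\infty\le\varepsilon$.
\end{itemize}

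This shows such pairs exist.

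\textbf{Necessity (the main obstacle).} The substantive step is proving that \emph{any} $L$-Lipschitz $g$ with $g(p)\le\tau-\eta$ must differ from $f$ at every $x$ with $|x-p|<(f(p)-\tau+\eta)/(2L)$. Here a point is "distant" if it lies at radius comparable to that quantity rather than infinitesimally close to $p$.

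The argument: suppose $g(x)=f(x)$. Then
\[
f(p)-g(p)\le |f(p)-f(x)|+|g(x)-g(p)|\le 2L|x-p|.
\]
Since $f(p)-g(p)\ge f(p)-\tau+\eta$, this forces $|x-p|\ge (f(p)-\tau+\eta)/(2L)$. Hence the modified set $\{x: g(x)\neq f(x)\}$ contains the open ball of that radius, so the patch's support has diameter bounded below. This is the same Lipschitz-ball technique as in \Cref{thm:lipschitz} and \Cref{thm:fragment}.

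Finally, I will check that the tent example attains this bound up to constants, which shows the nonlocality radius is tight. I will also note that dropping the Lipschitz requirement permits arbitrarily local patches, so the regularity hypothesis is what makes patching nonlocal.
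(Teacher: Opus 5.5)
The paper states this result in the appendix with no written proof; it is backed only by a pointer to the Lean artifact. So there is no argument of the authors' to compare against. Judged on its own terms, your proof is correct.

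Your formalization is what gives the word ``necessarily'' content: patches must stay $L$-Lipschitz, and nonlocality means a forced lower bound on the size of $\{x : g(x)\neq f(x)\}$. A bare existential is witnessed trivially by a uniform shift $g=f-\varepsilon$ when $\tau<f(p)<\tau+\varepsilon$. And without regularity on $g$, a patch can be made arbitrarily local, as you note. This reading also matches the ``patch blast radius'' gloss the paper gives in its comparison table.

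Your key lemma is stronger than the statement needs. It holds for every $L$-Lipschitz $f$ with $f(p)>\tau$ on any metric space. The tent attains it exactly, not just up to constants: $f(p)-\tau=\eta=\varepsilon/2$, and $\{f\neq g\}$ is precisely the open ball of radius $\varepsilon/(2L)$.

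One point worth adding is that the tent is the \emph{least} nonlocal witness. The patch you construct there changes nothing outside that ball, so with this witness ``distant'' can only mean ``at scale $\varepsilon/(2L)$''. If you want the literal reading, the same one-line estimate gives a much stronger witness. Take the ramp $f(x)=\tau+\varepsilon/2+L(x-p)$ with $g=f-\varepsilon$, so $\|f-g\|_\infty=\varepsilon$ and $g(p)=\tau-\varepsilon/2<\tau$. Every $L$-Lipschitz $h$ with $h(p)<f(p)$ satisfies $h(x)\le h(p)+L(x-p)<f(x)$ for all $x\ge p$. So eliminating the vulnerability at $p$ forces changes on the entire half-line, at arbitrarily distant points.
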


\section{Cost Asymmetry}
\label{app:cost}

\begin{theorem}[Exponential Cost Asymmetry]
\label{thm:cost}
For grid-based defense with $N \geq 2$ bins per axis in $d$ dimensions:
attack cost $\leq 1/\delta$ (dimension-independent); defense cost
$= N^d$ (exponential in~$d$); ratio
$\delta \cdot N^d \to \infty$ as $d\to\infty$.
\end{theorem}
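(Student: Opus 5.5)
The plan has three parts: a one-line upper bound for attack cost, an exact count for defense cost, and an elementary limit for the ratio. The last statement, Exponential Cost Asymmetry (\Cref{thm:cost}), has three pieces, and I would handle them in that order.

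\textbf{Attack cost.} I would model the attacker as the monotone-improvement procedure of \Cref{thm:convergence}. The score lies in $[0,1]$, and each successful step raises the running best score by at least $\delta$. After $k$ steps the score is therefore at least $k\delta$, which forces $k\delta \le 1$. So the attacker halts after at most $\lfloor 1/\delta\rfloor \le 1/\delta$ queries. The dimension $d$ never enters this argument, because each step is a single query wherever the point lives in the space. This gives the dimension-independent bound.

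\textbf{Defense cost.} I would formalize a grid-based defense as one that must certify, that is query, every cell of the partition of $[0,1]^d$ into $N$ bins per axis. The cells are indexed by $\{1,\dots,N\}^d$, so a product-counting argument (induction on $d$) gives exactly $N^d$ cells, and hence cost $N^d$. The point to make explicit is why fewer queries do not suffice. Any unqueried cell could contain an unsafe point, since the Basin Structure and Perturbation Robustness results (\Cref{thm:basin,thm:lipschitz}) allow a small open basin to sit inside any single cell. So an exhaustive defense cannot skip cells. I expect this modeling step to be the only real obstacle: the theorem's claim that the cost is ``$=N^d$'' is a definitional consequence of the exhaustive grid model, as the Limitations section concedes. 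I would state it as an assumption rather than try to prove a lower bound for general defenses.

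\textbf{Ratio.} The ratio of defense cost to the attack bound is $N^d/(1/\delta) = \delta N^d$. With $N \ge 2$ and $\delta>0$ fixed, we have $\delta N^d \ge \delta 2^d$. Since $2^d \ge d$ for all $d$, we get $\delta 2^d \ge \delta d \to \infty$ as $d\to\infty$. This completes the argument, and it mirrors the Lean formalization, which would use the same monotone-step count and the same power-growth lemma.
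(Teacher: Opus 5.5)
Your proposal is correct and takes the same route the paper relies on. The paper gives no written proof of this theorem, deferring to the Lean cost files, but its three clauses are exactly: the step count from \Cref{thm:convergence}, giving at most $\lfloor 1/\delta\rfloor \le 1/\delta$ attack steps; the cell count $N^d$ under the exhaustive-enumeration model that the Limitations section concedes; and the growth $\delta N^d \ge \delta 2^d \to \infty$. One minor caveat: \Cref{thm:basin,thm:lipschitz} only give \emph{lower} bounds on basin size, so they do not by themselves show that a basin can hide inside an arbitrary unqueried cell; that would need an explicit bump-function construction. Since you rightly take the $N^d$ count as part of the model, nothing in your argument depends on this.
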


At $d=2$ with $N=25$ and $\delta=0.01$, the ratio is $6.25$.
At $d=10$, it climbs to $\sim 10^{12}$.

\section{Additional Verified Results}
\label{app:additional}

\paragraph{Lipschitz displacement bound.}
$D$ is $K$-Lipschitz with $D(z) = z \implies
\dist(D(x), x) \leq (K+1)\dist(x,z)$.

\paragraph{Tool calls amplify failure.}
Each non-contractive tool call multiplicatively increases the pipeline's
effective Lipschitz constant: $K^n < K^{n+1}$ for $K \geq 2$.

\paragraph{Attacker monotone improvement.}
Best observed alignment deviation is non-decreasing across turns.

\paragraph{Attacker steering.}
If directional slope varies continuously with attacker parameter $\alpha$
and crosses $\ell(K+1)$, transversality is reachable by IVT.

\paragraph{Stochastic regularity.}
$g(x) = \mathbb{E}[f(D(x))]$ satisfies $g - f = 0$ on $\cl{S_\tau}$.

\paragraph{Discrete defense dilemma.}
An injective, utility-preserving defense is incomplete (every unsafe
input stays unsafe). A complete, utility-preserving defense is
non-injective (distinct inputs collapse to the same output).
The three properties---completeness, utility preservation,
injectivity---form a trilemma.

\paragraph{Defense position invariance.}
Lipschitz constant of defense-before-tools equals defense-after-tools:
$K_D \cdot K_T^n$.

\section{Lean Artifact}
\label{app:artifact}

The complete theory is verified in Lean~4.28.0 with Mathlib~v4.28.0,
available at
\url{https://github.com/mbhatt1/stuff/tree/main/ManifoldProofs}.
The artifact comprises 46 files:
\begin{itemize}[nosep]
  \item 10 core theory files (MoF\_01--MoF\_10)
  \item 10 cost theory files (MoF\_Cost\_01--MoF\_Cost\_10)
  \item 10 advanced theory files (MoF\_Adv\_01--MoF\_Adv\_10)
  \item 1 continuous relaxation (MoF\_ContinuousRelaxation)
  \item 1 $\varepsilon$-robust constraint + persistent unsafe region
    with global Lipschitz constant
    (MoF\_11\_EpsilonRobust)
  \item 1 discrete impossibility (MoF\_12\_Discrete)
  \item 1 multi-turn + stochastic extensions
    (MoF\_13\_MultiTurn)
  \item 1 representation-independent meta-theorem
    (MoF\_14\_MetaTheorem)
  \item 1 nonlinear agent pipelines (MoF\_15\_NonlinearAgents)
  \item 1 relaxed utility preservation (MoF\_16\_RelaxedUtility)
  \item 1 quantitative $\varepsilon$-band volume bound
    (MoF\_17\_CoareaBound)
  \item 1 cone measure bound for persistent unsafe region
    (MoF\_18\_ConeBound)
  \item 1 optimal defense characterization
    (MoF\_19\_OptimalDefense)
  \item 1 refined persistence with defense-path Lipschitz constant~$\ell$
    (MoF\_20\_RefinedPersistence); this file contains the primary
    formalization of \Cref{thm:persistent,lem:input-bound,def:steep}
  \item 1 gradient chain: $\|\nabla f(z)\| > \ell(K+1)$ implies
    persistent unsafe region via operator-norm direction extraction
    and derivative-based local growth
    (MoF\_21\_GradientChain)
  \item 3 capstone files (MasterTheorem, Euclidean instantiation,
    verification)
  \item 1 root import file (ManifoldProofs.lean)
\end{itemize}

\end{document}